\def\R{\mathbbm{R}}
\def\softmax{\text{\sf softmax}}
\def\RM{\text{\sf RM}}
\def\Id{\textbf{\sf Id}}
\def\Pr{\mathbb{P}}
\def\Esp{\mathbb{E}}
\def\CE{\mathbf{H}_{CE}}
\def\CS{\mathbf{C}_S}
\def\frakC{{\frak C}}
\def\calD{{\cal D}}
\def\calI{{\cal I}}
\def\calL{{\cal L}}
\def\calT{{\cal T}}
\def\calV{{\cal V}}
\let\toprule=\hline
\let\midrule=\hline
\let\bottomrule=\hline
\def\frakX{\mathfrak{X}}
\title{Do Word Embeddings %
Really Understand Loughran-McDonald's Polarities?}
\author{
Mengda Li
and 
Charles-Albert Lehalle 
}
\newtheorem{theorem}{Theorem}[section]
\newtheorem{property}[theorem]{Property}
\newtheorem{definition}[theorem]{Definition}
\begin{document}

\maketitle

\begin{abstract}
    In this paper we perform a rigorous mathematical analysis of the word2vec model, especially when it is equipped with the Skip-gram learning scheme.
    Our goal is to explain how embeddings, that are now widely used in NLP (Natural Language Processing), are influenced by the distribution of terms in the documents of the considered corpus. 
    We use a mathematical formulation to understand how this family of language models captures the joined distribution of terms in sentences. This formulation shed light on how the decision to use such a model makes implicit assumptions on the structure of the language, in particular about the short range and long range dependencies between terms.
    We show how Markovian assumptions, that we discuss, lead to a very clear theoretical understanding of the formation of embeddings, and in particular the way it captures what we call \emph{frequentist synonyms}. These assumptions allow to produce generative models and to conduct an explicit analysis of the loss function commonly used by these NLP techniques, that asymptotically reaches a cross-entropy between the language model and the underlying true generative model.
    
    Moreover, having in mind to assess the identifiability of embeddings, we produce synthetic corpora with different levels of structure and show empirically how the word2vec algorithm succeed, or not, to learn them.
    It leads us to empirically assess the capability of such models to capture structures on a corpus of around 42 millions of financial News covering 12 years. That for, we rely on the Loughran-McDonald Sentiment Word Lists largely used on financial texts and we show that embeddings are exposed to mixing terms with opposite polarity, because of the way they can treat antonyms as frequentist synonyms.
    Beside we study the non-stationarity of such a financial corpus, that has surprisingly not be documented in the literature. We do it via time series of cosine similarity between groups of polarised words or company names, and show that embedding are indeed capturing a mix of English semantics and joined distribution of words that is difficult to disentangle.
    
\end{abstract}


\paragraph{Acknowledgments.}
Authors would like to thank Sylvain Champonnois for deep discussions about the nature of text polarity and biases of embeddings, Jean-Charles Nigretto for preliminary work on biases in doc2vec models, Elise Tellier for long discussions about the optimal size of embeddings, Laurent El Ghaoui for challenging discussion about the representation of language models by embeddings, and G\'erard Ben Arous for discussions on the convergence of criteria and the choice of \emph{observables} in the empirical analysis of word embeddings.

\section{Positioning of this work}

Recent advances on Natural Language Processing (NLP) are largely based on the use of embeddings. It started with performing SVD on vectorial representation of words (see for instance \cite{laham1998learning}) before turning to less linear analysis like the use of Self Organizing Maps \cite{lagus1999websom}. More recently, the term of ``embeddings'' emerged with the idea of learning a lookup table while performing downstream tasks \cite{2013arXiv1310.4546M}; later on downstream tasks used LSTM (Long Short Term Memory) networks to account for the sequential aspects of the natural languages \cite{ghosh2016contextual}. 
Now BERT-like models as other attention-based models proposed to plug embeddings in sequences of local LSTM structures, to obtain more contextual (i.e. bi-directional) Euclidean representations of terms \cite{vaswani2017attention,devlin-etal-2019-bert}.

Despite the apparent successes of this series of improvements, very few mathematical modeling has been proposed. 
Some interesting papers are descriptive, exposing the formulas behind the mechanisms of embeddings, like \cite{implicitmatrix} that investigate a SVD viewpoint on embeddings, but they do not provide a probabilistic model corresponding to the implicit assumptions made by the considered NLP modeling algorithm.
As a comparison, the literature on topic extraction naturally provides this kind of generative models \cite{blei2003latent}.

The goal of this paper is to provide such a mathematical understanding of how and what embeddings are learning. Since it is a complicated task, we are limiting ourselves to the word2vec model, that is paradigmatic of this family of language models. The sophistication of subsequent approaches (like LSTM or BERT), is ``simply'' to add some sequential memory and some ``locality'' (since they use an ensemble of embeddings, each of them specializing locally, i.e. in the neighborhood of a syntactic context). Writing the formulas for such models is probably possible, but would require a lot of sophisticated notations. We clearly restrict our analysis to global embeddings of the word2vec family, but cover both Skip-gram and CBOW modeling approaches.
\medskip

During this theoretical analysis, we define what we call a \emph{Reference Model}. Qualitatively speaking, it is a very large stochastic matrix collecting for each word of the vocabulary the probability of occurrence of all words in ``its neighborhood''. Given this Reference Model, no embedding is needed (or one can consider that the associated embedding is the identity matrix). It allows us to define a word2vec representation of a corpus as a compression of its Reference Model in two components: one the one hand an embedding, qualitatively mapping ``similar words'' (in the sense of the Reference Model) together, and hence defining ``frequentist synonyms''; and on the other hand a context matrix, allowing to recover the Reference Model once it is multiplied by the embedding (and after a slightly non linear operation, i.e. a softmax layer). 

We are here in the spirit of other papers presenting embeddings as a compression (see for example \cite{may2019downstream,raunak2019effective,acharya2019online}), along with the initial papers on SVD, except that we can prove, under some assumptions, that the minimized criterion is a cross-entropy between the Reference Model and the compressed representation made of the embedding and the context matrices.
\medskip

In a second stage, we explore further the concept of frequentist synonyms (that ``should be put in the same neighborhood by the compression'') in association with the identifiability of word2vec embeddings. That for we generate synthetic corpora with know properties thanks to the control of their Reference Model. It allows us to test empirically that if the word2vec model can recover part of the structure we inject in the synthetic texts. The results are bellow standard expectations: the embeddings of synonyms do not exhibit strong cosine similarities, despite the fact that we use generative models satisfying strong mathematical assumptions (that is not the case of natural language).
\medskip

Last but not least, we use this understanding to train embeddings on a corpus of 42 millions of financial news from 2008 to 2020. 
The literature on the use of NLP on financial data has been pushed forward by the work of Loughran and McDonald who analysed the polarity of 10-K fillings of listed company \cite{loughran2011liability}. One of the outcome of their work is a dictionary of polarized words in categories (Positive, Negative, Litigious, etc) that have been later used as a supervised dataset \cite{kumar2017iitpb}, as a benchmark \cite{ke2019predicting} or as an input \cite{li2014news} to analyze other corpora (financial news being one of the most used). It is clear that NLP is now part of the toolbox to predict stock returns \cite{xing2018natural,araci2019finbert}, but to authors' knowledge no systematic study analyzing the capability of language models to capture financial semantics have been proposed. In this paper we do not try to predict stock returns, we rather focus on providing insight about the structure of financial news from the perspective of embeddings, with a focus on semantic antonyms and non-stationarity. 
\medskip

The paper is structured as follow: Section \ref{sec:theory} starts by defining properly word2vec model and its Skip-gram version, then explains the relationship between the internal representation of such an language model and the underlying ``truth'' of a Reference Model. Since a Reference Model is target that embeddings try to learn, it is pivotal in our analysis. We then make additional Markovian assumptions to ease the process of proving how the standard Skip-gram loss function asymptotically behaves like a cross-entropy between the learned model and the Reference Model.
This allow us to comment the influence of synonyms on the structure of embeddings.
Section \ref{sec:synth:experiments} exhibits experiments on synthetic corpus generated under restrictive assumptions, allowing us to control what the embeddings should learn. It shows the importance of the structure of the underlying model, and then it shows that the similarity between the representation of synonyms is not as large as expected, but significant when compared to other groups of words.
Then Section \ref{sec:finance} exploits a corpus made of 42 millions of financial News in conjunction with the Loughran-McDonald Lexicon. The latter is providing the structure of polarized words (mainly: Positive, Negative and Litigious) needed for our analysis. It allows us to explore the way embeddings can mix synonyms and antonyms, and we find and explain different results obtained on headlines only versus on the full-text of financial News. 
Moreover, we study the non-stationarity of this structure, in particular around one particular example of a company name that is listen under Wikipedia's \emph{List of corporate collapses and scandals}.

\section{Theoretical analysis: learning embeddings and its relation with generative models}
\label{sec:theory}

An embedding is an vectorial representation of terms of a document, that is learned jointly with other NLP tasks that is learned to model a language or more recently jointly with different downstream task (like part of speech tagging or entity recognition). The underlying idea is that the learning phase positions the terms in the space of embeddings such a way that the downstream tasks are easier.
As a consequence, it is expected that the embeddings will position in the same neighborhood terms having similar roles in sentences. Academic papers, working of the topology in the space of embedding, identified some interesting algebra properties in the space of embeddings, like: ``\emph{King - man + woman = Queen}''. In this paper we focus on the word2vec algorithm which is proposed in \cite{2013arXiv1301.3781M} and \cite{2013arXiv1310.4546M}, and especially on its Skip-gram version.  

\subsection{The word2vec model}

To be compatible with the standard word2vec Skip-gram notations of \cite{word2vec},
let assume that we face a set of documents, or a very long document that is a sequence of words belonging to a vocabulary $\calV:=\{x_1, \cdots, x_V\}$.
The word2vec proposed in \cite{Goodfellow-et-al-2016} can be written as a neural network with one hidden layer of size $N$, that is called the \emph{embedding size} and that is usually far lower than $V$ (see Figure \ref{fig:Skip-gram}).
Typically English vocabulary needs at least 70,000 words and the embedding size used by practitioners is between 150 and 400.

To feed to the neural network, the $i$-th word of the vocabulary is encoded using a one-hot vector $x_i$, i.e. $x_i\in\R^V$ that is a vector with zeros everywhere but only one $1$ at its $i$-th coordinate.
The weights between the input layer and the output layer can hence be represented by a $V\times N$ matrix ${W}$. Each row of ${W}$ is the $N$-dimension vector representation ${e}_i$ of the associated word of the input layer. Formally row $i$ of ${W}$ can be identified with the representation of $i$-th word and noted ${x_i}^T W$, given as the context word one-hot vector ${x}_i$. 
The activation of the hidden layer that is linear: ${h}_{x_i}^T = {x_i}^TW = {W}_{(i, \cdot)}$,
which is essentially copying the $i$-th row of ${W}$ to ${h}_i$. In this paper, we use row vector notation. We call ${W}$ the \emph{word-embedding matrix}, and its $i$-th row is commonly called the \emph{embedding of the $i$-th word of the vocabulary}.

From the hidden layer to the output layer, there is a matrix of weight commonly noted ${W}'$, which is an $N\times V$ matrix, associating an embedding to a long row of size $V$. $W'x_j$ the $j$th column of the \emph{context matrix} ${W}'$. Its output layer as a $\softmax$ activation, mapping a vector $(z_i)_i$ to another vector $(\exp(z_i)/\sum_j \exp(z_j))_i$ that is positive and which coordinates sum to one. The output of such a neural network is thus homogeneous to a vector of probabilities. As a consequence, the word2vec allows to compute the expected probability of occurrence of any $j$-th word of the vocabulary conditioned on the input word $i$ that reads $\softmax({x_i^T W W'}) x_j$.

\begin{figure}[htbp]
\centering
\includegraphics[width=3in]{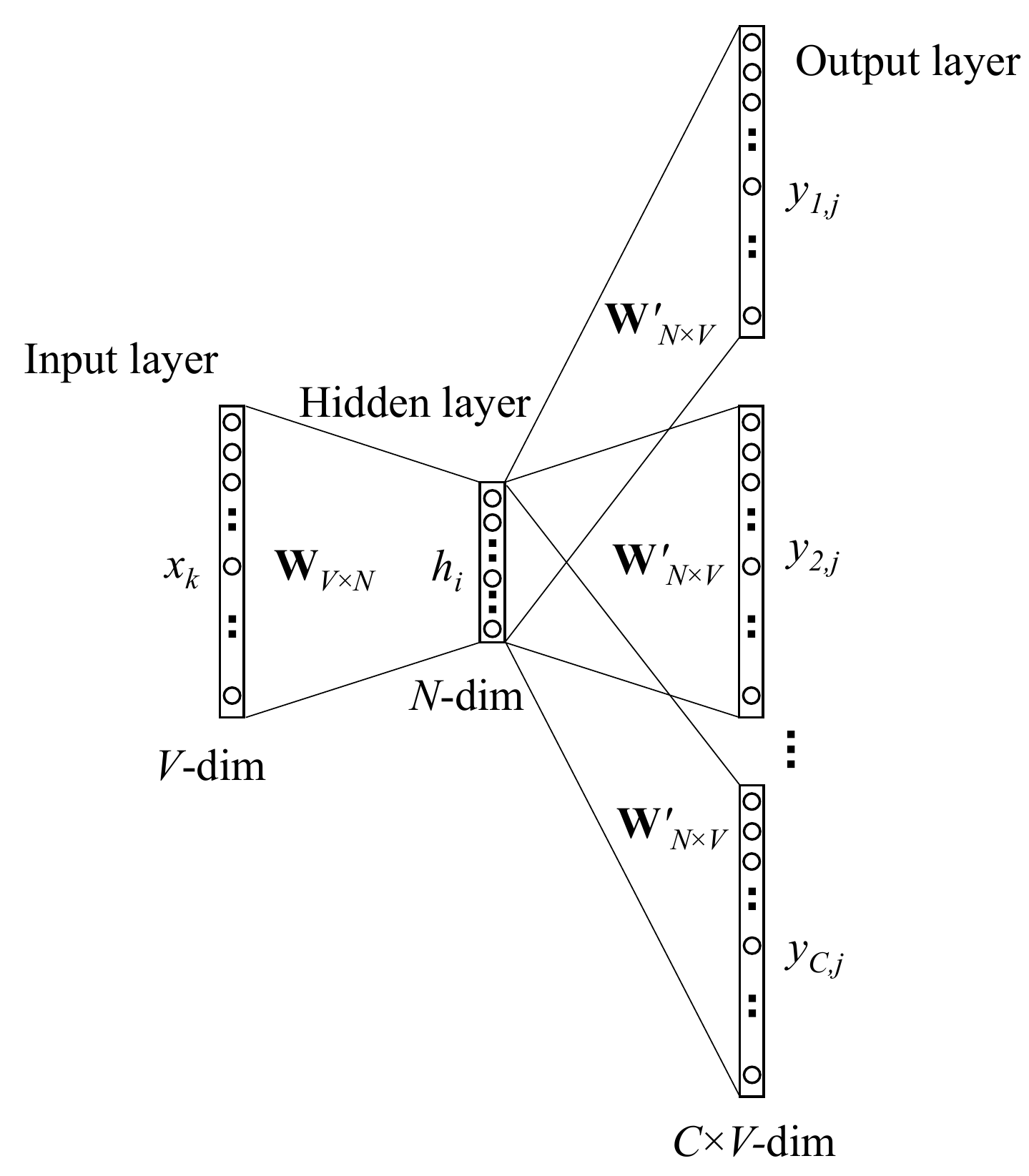}
\caption{The Skip-gram model.}
\label{fig:Skip-gram}
\end{figure}

Moreover, the word2vec Skip-gram model has an hyperparameter $C$ that is the number of words that is explicitly considered by the model. 
In this paper, for the simplicity of notation, the considered words are the $C$ words following the input word.\footnote{It is not need in general; their positions can be chosen by the user.} 
As a consequence the variable
\begin{equation}
y_{c,j}^i =  \softmax({x_i^T W W'}) x_j, \quad \quad \forall c \in \{1,.., C\}
\label{eq:cbow-uj}
\end{equation}
computes the estimate of the probability of occurrence of the $j$-th word of the vocabulary amongst the $C$ words following the $i$-th word of the vocabulary in the considered corpus of documents.
Note that each output $y_{c,j}^i $ is independent and only depend on the input word.

\begin{definition}[Parameters of a word2vec model]
A word2vec model is defined by the triplet of parameters: embeddings, contexts and width; i.e. $(W,W',C)$.
\end{definition}

\subsection{The Skip-gram loss function}  

\paragraph{Assuming the existence of an ``underlying'' model of the text.}
In the following sections, we use $(X_k)_k$ to denote the sequence of words in a corpus. 
To be able to define a probabilistic description of the embeddings, one needs first to make the weak assumption that the sequence of the $K$ words constituting the corpus are drawn from a probability distribution $\calL_\calT$ that exists.
You can think that this distribution allows to generate the whole sample at once, with a lot of bi-directional dependencies, or you can think that $(X_k)_k$ is a stochastic process (meaning that there is an ``information arrow'' pointing from the left to the right in the text). It may be generated by a Markov chain if you believe in very short term memory in the text, or even be i.i.d. realizations of a random variable if you think the words are generated according to their histogram of frequencies in the English language. 
The existence of this law will enable us to write probabilities.
\\
Since equality (\ref{eq:cbow-uj}) defines the likelihood of the occurrence of word $j$ in the neighborhood of word $i$, the word2vec model implies a specific structure of this probability $\calL_\calT$. Or at least it assume that the loss function it minimizes pushes this likelihood to be compatible with the underlying distribution.
This is typically what we will explore in this paper: which assumptions on $\calL_\calT$ are naturally compatible with the way the word2vec model build its likelihoods? That for we need to start with a good understanding of the Skip-gram loss function.
\medskip

Following our notations: $X_k$ is the $k$-th word in the document while $x_i$ is the $i$-th word in the vocabulary. $x_i$ denotes the one hot encoding of the $i$-th word in the vocabulary (word index). Hence the event ``$X_k=x_j$'' means that the $k$-th word observed in the corpus is the $j$-th word of the vocabulary.

\paragraph{Writing the Skip-gram loss function.}
The part of the loss function associated with the $k$th word of the document corresponds to the estimate by the word2vec model of likelihood to observe the $C$ following words of the corpus. It reads
\begin{eqnarray}\label{eq:loss:skipgram}
\ell(X_k,\ldots,X_{k+C}) 
&=&{-\log \hat\Pr_{W,W'}(X_{k+1}=x_{j_1}, \cdots, X_{k+C}=x_{j_C} | X_k=x_i) }\\\nonumber
&=& -\log \prod_{c=1}^C\softmax(x_i^T W W')x_{j_c} \\\nonumber
&=& -\log \prod_{c=1}^C\frac{\exp(x_i^T W W'x_{j_c})}{\sum_{j'=1}^V\exp(x_i^T W W' x_{j'})} \\\nonumber
&=& -\sum_{c=1}^C x_i^T W W'x_{j_c} + C\cdot\log\sum_{j'=1}^V\exp(x_i^T W W' x_{j'}),
\end{eqnarray}
where $x_{j_1}, \cdots, x_{j_C}$ are the word occurring after the $k$-th one in the text. During the learning phase, the Skip-gram word2vec hence deals with sequences of $C+1$ words that can be found in the text.
\\
The full loss function is the average of $\ell(\cdot)$ over all the words of the text, that can be expressed as $\frac{1}{K}\sum_k \ell(X_k,\ldots,X_{k+C})$, i.e. the empirical average over the words of the text.

\paragraph{Defining a Reference Model.}
First define what we will call a ``\emph{Reference Model}'': qualitatively it is a word2vec model with a trivial embedding, i.e. the size of the embedding is the size of the vocabulary: $N=V$, and hence the embedding matrix $W$ is the identity.
Since the rows of the context matrix of a Reference Model do not interact and the $\softmax$ layer is not needed; it is enough to arbitrary scale its rows to sum to one. This non-linearity can thus be removed from the word2vec of a Reference Model.

\begin{definition}[Reference Model]\label{def:RM}
A \emph{Reference Model} $\RM$ is a triplet $(\Id, W'_0,C)$ that can be identified to the parameters of a word2vec neural network. 
With $V$ the size of the vocabulary, $W'_0$ is a $V\times V$ stochastic matrix which element $(i,j)$ records the probability to see the $j$-th word of the vocabulary in the $C$ words following the $i$-th word.
\end{definition}
Different kinds of Reference Models can be built:
\begin{itemize}
    \item[($i$)] The Reference Model corresponding to a given word2vec model $(W, W', C)$ that is $(\Id, W'_0, C)$ where for any $i$, $x_i^T W'_0:=\softmax(x_i^T W W')$; we use the notation $\RM(W,W')$.
    \item[($ii$)] The Reference Model corresponding to a corpus of document that is $(Id, W'_0, C)$ where $W'_0$ is this time made of the averaged empirical probabilities of occurrence of the $C$ words following each possible word;
    we use the notation $\RM(X_1, \ldots, X_K)$.
    \item[($iii$)] The Reference Model corresponding to underlying probability distribution $\calL_\calT$ that generated the corpus;
    we use the notation $\RM(\calL_\calT)$.
\end{itemize}
Note that for the same corpus: 
\begin{itemize}
    \item Given $(X_1, \ldots, X_K)$ has been generated by $\calL_\calT$ (and provided that $K$ is large enough), we expect to have $\RM(X_1, \ldots, X_K)\simeq \RM(\calL_\calT)$.
    \item If a word2vec is a ``good model'' for this corpus, the corresponding Reference Model $\RM(W,W')$ should be ``close'' to the third one $\RM(\calL_\calT)$.
\end{itemize}
The question of a \emph{generative model} can be formulated this way: 
\emph{what are the assumptions over $\calL_\calT$ \emph{(i.e. the true underlying distribution)}, such that the Reference Model of the word2vec learned on a \emph{(very long)} sample generated by $\calL_\calT$ is the same as the Reference Model of $\calL_\calT$?}

\subsection{Probabilistic weaknesses of the word2vec embeddings}
\label{sec:PLM}


It is now clear that the information contained in a word2vec model on the learned language is captured by a factorization over $V\times N + N\times V$ coefficients (i.e. the embeddings $W$ and the context $W'$) of the joined probability of occurrences of $C+1$ words at a given distance that could be described by $(V\times V)^C$ parameters.
The key quantity that is manipulated by the word2vec is an estimate of $\Pr_{\calL_\calT}(X_{k+1}=x_{j_1}, \cdots, X_{k+C}=x_{j_C} | X_k=x_i)$ for any word position $k$ in the observed text. We will use the notation $\hat\Pr_{W,W'}(X_{k+1}=x_{j_1}, \cdots, X_{k+C}=x_{j_C} | X_k=x_i)$ to underline that this probability is estimated using matrices $W$ and $W'$ only.

We will list here the main weaknesses of this model and give pointers and intuitions to how more recent models answered to these points. 

\paragraph{Weakness 1: The input is made of only one word.}
The first potential issue of the word2vec Skip-gram approach is that there is only one word as input. It implies that once the model parameters $W$ and $W'$ are chosen, there is no difference between $\hat\Pr_{W,W'}(X_{k+1}=x_{j_1}, \cdots, X_{k+C}=x_{j_C} | X_k=x_i, X_{k-1}=x_{i'})$ and $\hat\Pr_{W,W'}(X_{k+1}=x_{j_1}, \cdots, X_{k+C}=x_{j_C} | X_k=x_i, X_{k-1}=x_{i''})$.\footnote{If we consider the learning process it is clear that the loss function had to cope with the following two sequences: $(X_{k-1}=x_{i'}, X_k=x_i, X_{k+1}=x_{j_1}, \cdots, X_{k+C-1}=x_{j_{C-1}})$ and $(X_{k-1}=x_{i''}, X_k=x_i, X_{k+1}=x_{j_1}, \cdots, X_{k+C-1}=x_{j_{C-1}})$. The information is inside $W$ and $W'$, but at this stage it is difficult to know how, and it is sure that it is averaged using as weights the relative number of occurrences of these sequences in the text.}
They are a lot of different ways to explicitly fix this issue about a potential fragility of the input: 
\begin{itemize}
    \item the \emph{CBOW} version of the word2vec algorithm is one of them.
    The CBOW uses the same weights $W$ and $W'$ but instead of minimizing a loss that is the likelihood of occurrence of a list of words observed ``after'' $X_k$, it takes a list of word occurring``before'' $X_k$ and averages their representations to target the $k+1$ word of the text:
    $$\begin{array}{l}
    \hat\Pr^{\text{\sc cbow}}_{W,W'}(X_{k+C}=x_{j_C} | X_{k+C-1}=x_{j_{C-1}}, \cdots, X_{k+1}=x_{j_1}, X_k=x_i) \\[.5em]
    \phantom{\hat\Pr^{\text{\sc cbow}}_{W,W'}}= \softmax( \frac{1}{C}(x_{j_{C-1}}+ \ldots + x_{k+1}+x_k)^TW W' )x_{j_C}.
    \end{array}$$
    It means that the internal representations of the $C-1$ input vectors are averaged, and that this average is used to predict the likelihood of the next word.
    
    \item Since Skip-gram is trying to estimate the averaged probability of occurrence of words occurring after the pivotal $k$-th word, and the CBOW is doing the reverse (estimate the probability of occurrence of the $k+1$-th word given an average of the word that are before $X_k$), one could imagine a mix of these two approaches:
    using a weighted average of word ``before'' $X_k$ to predict a weighted averaged probability of words positioned after $X_k$.
    
    It is not very far away of what the self-attention mechanism is doing \cite{vaswani2017attention} except that weights for the average are not chosen a priori, but they are learned and they are a function of words surrounding each weighted word: it is the \emph{attention} associated to the occurrence of the sequence of words $(X_{k-1}=x_{i'}, X_k=x_i, X_{k+1}=x_{j_1}, \cdots, X_{k+C-1}=x_{j_{C-1}})$.
    
    \item Another way to have more information about the words before $X_k$ would be to increase the size of the state space and to concatenate $X_k$ and $X_{k-1}$. But it is considered to be far too demanding in number of parameters. NLP methods prefer to use averages.
\end{itemize}
The take-away if this weakness is that it prevents a word2vec to capture some subtleties of the learned corpus. 
One solution is to average over more words such a way that their weights in the average (i.e. attention) are conditioned by the joined distribution of words. It certainly produces \emph{regularization} and \emph{localization} (in the sense that the embeddings are weighted differently according to their position in the very high dimensional space of observed sequences of words). It means that the conclusions we will obtain on word2vec will probably be only valid locally for attention mechanisms.

\paragraph{Weakness 2: The ordering of words is not taken into account.}
From the previous Section, it is clear that for a Skip-gram word2vec (and we know it is the same for a CBOW one) once the parameters $W$ and $W'$ are chosen, there is no difference between the word2vec estimate of $\hat\Pr_{W,W'}(X_{k+1}=x_{j_1}, \cdots, X_{k+C}=x_{j_C} | X_k=x_i)$ and $\hat\Pr_{W,W'}(X_{k+1}=x_{\sigma(j_1)}, \cdots, X_{k+C}=x_{\sigma(j_C)} | X_k=x_i)$, for any $C$-permutation $\sigma$.

Nevertheless, if we consider the learning process: only sequences that have been seen in the text are considered in the minimization of the loss function, hence $W$ and $W'$ incorporate a trace of the sequences. 
But it would be interesting to explicitly inject the positions of the words in the probabilistic model.

This is what BERT is doing, by concatenating the embeddings of the words (i.e. the $W$ matrix of the word2vec) with a positional embedding \cite{devlin-etal-2019-bert}.
It is clear that it adds the positional information that is missing in word2vec. 
As an illustration, just have a look at the expected effect on the estimate of the likelihood if this idea of ``positional embeddings'' is transposed to the word2vec mechanism:
say that we augment the two matrices $W$ and $W'$ by a positional one $P$ that is used the standard word2vec way. Now we can replace (with $C=2$, to keep it simple, and with the notation $p_u$ to encode that a word is in position $u$)
$$\hat\Pr_{W,W'}(X_{k+1}=x_j, X_{k+2}=x_{j'}|X_k=x_i) = \frac{1}{2}(\softmax(x_{i}^T W W')x_j+\softmax(x_{i}^T W W')x_{j'})$$
by:
$$\begin{array}{l}
\hat\Pr_{W,W',P}(X_{k+1}=x_j, X_{k+2}=x_{j'}|X_k=x_i) \\[.3em]
\phantom{\hat\Pr_{W,W',P}} = \displaystyle \frac{1}{2}\left(\softmax(x_{i}^T W W' + p_1^T PP'p_2)x_j+\softmax(x_{i}^T W W' + p_1^T PP'p_3)x_{j'}\right).
\end{array}$$
This produces different shifts inside the softmax depending on the relative positions of the words. It implies that using such a mechanism, a word2vec would consider two words to be interchangeable with respect to a given $x_i$, not only if they have the same $x_i^TWW'$ but also if they are positioned at the same distance if $i$. 

\subsection{Markovian generative models}

In this section, we explore the theoretical properties of the word2vec embeddings under the assumption that the underlying language text is generated by a Markov model.
We restrict ourselves to the Skip-gram approach: the loss function that is minimized is derived form $\ell(k)$ defined by equality (\ref{eq:loss:skipgram}).

\begin{definition}[Markov generative model for texts]
A Markov generative model for a text over a vocabulary $\calV$ of size $V$ is defined by a stochastic transition matrix $K$ of size $V\times V$ such that $K_{i,j}$ is the probability that the $j$-th word of the vocabulary follows the $i$-th word of the vocabulary.

Thanks to $K$, and provided that an initial distribution $m_0$ over words is defined, a text of size $T$ is a stochastic process $(X_k)_{1\leq k\leq T}$ such that $X_0\sim m_0$ and 
\begin{equation}\label{eq:probaK}
\Pr(X_{k+1}=x_j|X_k=x_i)=K_{i,j}.    
\end{equation}
Markov generative models are restricted to irreducible Markov chains.
\end{definition}

It is natural restrict Markov generative models to irreducible Markov chain, since it is realistic to think that a language can link two arbitrary words of the vocabulary by a finite text.

Texts generated by Markov models are of course poorer that standard English texts. Nevertheless we could easily imagine ``multiple inputs'' or ``local'' versions of Markov text generation that would correspond the the formerly identified probabilistic weaknesses of the Skip-gram word2vec.
Moreover, one could think about extensions of the Markovian framework to more subtle ones like the one chosen for the Latent Dirichlet Allocation \cite{LDA} without killing most of the theoretical mechanisms that will be used in this section; but this is out of the scope of this paper.

Since a Markovian text satisfies $\Pr(X_{(k+c)}=x_{j}|X_k =x_{i}) = x_{i}^T K^c x_{j}$, it is straightforward to see that the main quantity of interest for a Skip-gram word2vec is linked to the Reference Model $\RM(K)$ corresponding to the Markov kernel $K$ since
\begin{equation}\label{eq:MK}
    W'_0(\RM(K)) = \frac{1}{C}\sum_{i=1}^C K^i.
\end{equation}
This comes from the fact that the language model is trained on a sequence of $C$ words without any consideration of ordering.
Hence, given that $\calI$ is a uniform random variable over $\{1,\ldots,C\}$, the Skip-gram word2vec tries to estimate
\begin{eqnarray}\nonumber
    \Pr(X_{k+\calI} | X_{k}) 
    &=& \sum_{i=1}^C   \Pr(X_{k+i} | X_{k}) \, \Pr( \calI= i) \\
    &=& \frac{1}{C} \sum_{i=1}^C \Pr(X_{k+i} | X_{k}).
\end{eqnarray}

\subsubsection{Markov chain properties}

In this subsection, we formulate well known properties of Markov chains \cite{revuz2008markov} that we will need thereafter within our models and notations.
Let $\calD(T)=(X_1,X_2,\ldots,X_T)$ be a document made of $T$ words generated using a Markov chain with a transition matrix $K$ on a vocabulary of $V$ words, and denote by $\mu$ the stationary distribution of the Markov chain. The state space of our Markov chain is the vocabulary $\mathcal{V}$ which is finite. Hence our Markov chain is positive recurrent; we will largely rely on the ergodic theorem.

\begin{property}[Convergence of empirical distribution]
    \begin{equation}\label{eq:station:conv}
        \lim_{T\rightarrow \infty} \frac{1}{T}\sum_{k=1}^T \mathbbm{1}(X_k=x_1) = \mu(x_i) 
    \end{equation}
\end{property}
Note that $\mu$ is a vector of size $V$, and the $i$th coordinate $\mu_i = \mu(x_i)$ is the probability of appearance of work $i$ in the stationary distribution.
\begin{proof}
Direct consequence of Ergodic theorem. 
See Theorem 4.16 in \cite{LevinPeresWilmer2006} or Corollaire 13.6.2 in \cite{legall}.
\end{proof}

Then we state a property for the convergence of empirical conditional distribution:
\begin{property}[Convergence of empirical Conditional distribution]
Let $\mathcal{I}$ be an independent uniformly random index from $1$ to $C$ (each index has equal probability $\frac{1}{C}$),
\[
\lim_{T\rightarrow \infty} \frac{\sum_{k=1}^{T} \mathbbm{1}(X_{k}=x_i, X_{k+\mathcal{I}} = x_j)}
   {\sum_{k=1}^{T} \mathbbm{1}(X_{k}=x_i)} = \frac{1}{C}\sum_{l=1}^C K^l(x_i, x_j)
\]
\end{property}
\begin{proof}
We only stage the proof in the simplest $C=1$ case to keep it short; when $C>1$, the proof is similar.
Just observe that $(X_k, X_{k+1})_{k\in \mathbb{N}}$ is an irreducible Markov chain too over the set of states $\{(x,y) | K(x,y)>0\}$ with kernel $Q((a,b),(c,d))=K(c,d)\mathbbm{1}(b=c)$. Its unique invariant probability measure is $\pi(x,y) = \mu(x) K(x,y)$. Applying the ergodic theorem to $\mathbbm{1}(X_{k}=x_i, X_{k+1} = x_j)$ reads
\[
\frac{\frac{1}{T}\sum_{k=1}^{T} \mathbbm{1}(X_{k}=x_i, X_{k+1} = x_j)}
   {\frac{1}{T}\sum_{k=1}^{T} \mathbbm{1}(X_{k}=x_i)} \underset{T\to \infty}{\longrightarrow }  
   \frac{\mu(x_i) K(x_i,x_j)}{\mu(x_i)} = K(x_i,x_j)
\]
\end{proof}

\subsubsection{From a Reference Model to a Markov kernel}

Given a Reference Model with $C=1$, i.e. $(\Id, W_0',1)$, 
a Markov chain with $K:=W'_0$ is the corresponding Markov generative model:
by construction the two reference Models coincide. 
But when $C>1$, there does not always exist a $K$ satisfying (\ref{eq:MK}) for a given $W'_0$.
\begin{theorem}[Representative generative model]
Given a reference model $(\Id, W_0',C)$ on a vocabulary of $V$ words, 
there exists a Markov chain with a transition matrix $K$ verifying equality (\ref{eq:MK}) if one of these two conditions is verified:
\begin{enumerate}[(i)]
    \item $C=1$;
    \item $W'_0$ is symmetric (or diagonalizable) and all its eigenvalues are in $[0, 1]$.
\end{enumerate}
\end{theorem}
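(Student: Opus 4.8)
Observe first that the condition required of $K$ in (\ref{eq:MK}) is exactly $p(K)=W'_0$, where $p(t):=\frac1C\sum_{i=1}^{C}t^{i}$ is applied to $K$ through the usual polynomial (functional) calculus. When $C=1$ this reads $K=W'_0$, which is a stochastic matrix by hypothesis, so (i) is immediate. For (ii) the plan is to ``invert $p$'' on the spectrum of $W'_0$: diagonalize $W'_0=PDP^{-1}$ with $D=\mathrm{diag}(\lambda_1,\dots,\lambda_V)$ and all $\lambda_i\in[0,1]$, choose a scalar inverse branch $q$ of $p$ mapping $[0,1]$ into $[0,1]$, and set $K:=P\,\mathrm{diag}(q(\lambda_1),\dots,q(\lambda_V))\,P^{-1}$; then $p(K)=W'_0$ holds by construction, and only the ``$K$ is a transition matrix'' part remains to be argued.

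\textbf{The scalar inversion and the easy checks.}
The function $p$ restricted to $[0,1]$ is continuous and strictly increasing ($p(0)=0$, $p(1)=1$, and $p'(t)=\frac1C\sum_{i=1}^{C}i\,t^{i-1}>0$ on $[0,1]$), hence a bijection of $[0,1]$ onto itself; let $q:=(p|_{[0,1]})^{-1}$. Then $q$ is continuous, strictly increasing, $q(0)=0$, $q(1)=1$, and — being the inverse of a convex increasing function — $q$ is concave, so in particular $q(\lambda)\ge\lambda$ on $[0,1]$. Since $W'_0$ is stochastic, $\mathbf 1$ is a right eigenvector of $W'_0$ for the eigenvalue $1$; choosing the diagonalization so that $\mathbf 1$ is the corresponding column of $P$, one gets $K\mathbf 1=q(1)\,\mathbf 1=\mathbf 1$, i.e. the rows of $K$ sum to $1$. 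Reality of $K$ follows because $q$ agrees on the finite set $\mathrm{spec}(W'_0)\subset[0,1]$ with a real interpolation polynomial, so $K$ is a real polynomial in $W'_0$ (in particular $K$ commutes with $W'_0$, which is forced anyway by $W'_0=p(K)$). At this point $K$ is a real matrix with unit row sums satisfying $p(K)=W'_0$.

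\textbf{The main obstacle: entrywise nonnegativity.}
The property that does \emph{not} come for free is $K_{ij}\ge 0$: the functional calculus does not preserve entrywise positivity, so a priori the $K$ produced by the recipe above can have a negative entry even though every eigenvalue lies in $[0,1]$. I expect this to be the crux of the proof, and the place where the extra hypothesis on $W'_0$ (symmetry, hence double stochasticity and positive semidefiniteness) must be exploited, not merely the location of the spectrum. The route I would try is: (a) reduce to the symmetric case, so that $P$ can be taken orthogonal and $W'_0,K$ are orthogonally diagonalizable in the same basis; (b) use the concavity of $q$ with $q(0)=0$, $q(1)=1$ — for instance through a representation $q(t)=\int_{0}^{1}\min(t/s,1)\,d\nu(s)$ with $\nu$ a probability measure on $(0,1]$ — to write $K$ as a generalized average of matrices built directly from $W'_0$ and $\Id$; and (c) combine this with the nonnegativity and double stochasticity of $W'_0$ to conclude $K\ge 0$. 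Step (c) is where the genuine difficulty sits, since it is the only point that uses structural information about $W'_0$ beyond its spectrum; one should also check at the end that the resulting $K$ can be taken irreducible, consistently with the definition of a Markov generative model.
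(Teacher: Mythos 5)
Your plan is, up to the point where you stop, the same argument the paper gives: case (i) is treated as immediate; for (ii) the paper diagonalizes $W'_0=PDP^{-1}$, notes that $x\mapsto \frac{1}{C}\sum_{i=1}^{C}x^{i}$ is a bijection from $[0,1]$ to $[0,1]$, defines $K$ by applying the inverse branch to the eigenvalues, and checks that the rows of $K$ sum to one using the eigenvector $(1,\dots,1)$. Everything you actually establish is established the same way there, so on the common ground the two proofs coincide.

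The issue is the step you yourself single out as the crux and then leave open: entrywise nonnegativity of $K$. Unit row sums alone do not make $K$ a transition matrix, and, as you say, the functional calculus does not preserve entrywise nonnegativity (the principal square root already fails to do so on doubly nonnegative matrices in general, and your $q$ is a root-like concave function). Your steps (a)--(c) are only a programme, with (c) explicitly unresolved, so as written your proof of (ii) is incomplete. You should know, however, that the paper's own proof is incomplete in exactly the same place: it concludes from $Kv=v$ that ``each row of $K$ sums up to $1$ which prove that $K$ is stochastic'' and never examines the sign of the entries of $P\,q(D)\,P^{-1}$ (nor the irreducibility required by the paper's definition of a Markov generative model, which you also flag). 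So you have not missed an argument that the paper supplies; you have correctly located a gap that the published proof glosses over. To make the theorem airtight one would need either to carry out something like your step (c), or to strengthen the hypotheses (for instance, assume outright that $P\,q(D)\,P^{-1}$ is entrywise nonnegative, or restrict to $W'_0$ sufficiently close to a rank-one stochastic matrix so that the correction terms cannot create negative entries).
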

If our kernel $K$ is diagonalizable (for example, if it is reversible), then $K = P \Delta P^{-1}$ (and if it reversible, $P$ is a orthonormal basis, i.e., $P^{-1}=P^{T}$) where $\Delta$ is a diagonal matrix. Following the spectral properties of transition kernel (Lemme 9 in \cite{mixting}): if $K$ is ergodic then the maximal value in the diagonal $\Delta$ is $1$ and it is unique. The other diagonal elements have absolute value less than $1$. We can deduce the relation linking $W'_0$ and $C$ with the diagonal decomposition of $K$:
\begin{equation}
    P^{-1} W'_0 P = \frac{1}{C} \sum_{i=1}^C \Delta^i.
\end{equation}
The right side of equation can be calculated exactly by the formula of geometric sum: it is a diagonal matrix which terms are $\frac{1}{C} \sum_{i=1}^C \lambda^i$ 
for all $\lambda \ne 1$.
\\
Now we can show the equivalence:
\begin{itemize}
    \item Given a stochastic matrix $K$, we can always find $W'_0$ such that $W'_0$ is also a stochastic matrix and satisfying (\ref{eq:MK}).
    
    The matrix constructed in this way $K:=P \Delta P^{-1}$ is stochastic. $v = (1,1,...,1)$ is the eigenvector of $W'_0$ of eigenvalue $1$ such that $W'_0 v =v$ and the first column of $P$ is $v$. So $P^{-1} v = (1,0, ..., 0)$ hence $P \Delta P^{-1} v = P (1,0, ..., 0) = v$. $K v = v$ means that each row of $K$ sums up to $1$ which prove that $K$ is stochastic.
    
    \item  Inversely, given a \emph{diagonalizable}  stochastic matrix $W'_0$ with eigenvalues in $[0,1]$, we can also find such a matrix $K$. In fact, if $W'_0 = P D P^{-1}$ where $D$ is a diagonal matrix with value between $0$ and $1$, we can find a diagonal matrix $\Delta$ such that $\frac{1}{C}\sum_{i=1}^C \Delta^i = D$ because the function 
    $$x\mapsto  \frac{1}{C}\sum_{i=1}^C x^i = \frac{1}{C}\left( \frac{1- x^{C+1}}{1-x} - 1\right)$$
    is bijective from $[0,1]$ to $[0,1]$. 
\end{itemize}
~\hfill$\square$

\subsection{Understanding Word2Vec as a Compression of a Reference Model}
\label{sec:compression}

\subsubsection{How to compress a Reference Model} 

 Assume that we start with the Reference Model $(\Id, W'_0,C)$ of a corpus and try to qualitatively understand what one can expect from a word2vec model of the same corpus when $N$, the dimension of the word embeddings, is $V-1$:
 \begin{itemize}
     \item On the one hand the likelihood associated by the word2vec model to the pair of words $(x_i, x_j)$ reads
     $$\softmax(x_i^T W W') x_j,$$
     \item on the other hand the empirical occurrences in the corpus says that this likelihood should be
     $$x_i^T W_0' x_j.$$
 \end{itemize}
If a pair of words $(x_i,x_{i'})$ is such that $x_i^T W_0' x_j = x_{i'}^T W_0' x_j$, then the rows $i$ and $i'$ of $W_0'$ are identical.
The best way to compress $W'_0$ is hence
\begin{enumerate}
    \item to use the embedding matrix to map rows $i$ and $i'$ on the same embedding.
    We will use the notation $E[i'\mapsto i]$ for this embedding matrix that is the $V\times V$ identity matrix with column $i'$ that is removed, and with $1$ in place of $0$ at its element $(i',i)$
    \item to remove the row $i'$ of $W'_0$. We use the notation $W_0'[i']^\ominus$ for a $(V-1)\times V$ matrix corresponding to $W_0'$ once its $i'$-th row is deleted.
\end{enumerate}
Then $E[i'\mapsto i]\cdot W'_0[i']^\ominus=W'_0$. As a consequence:
\begin{property}[Trivial compression of a Reference Model by a Word2vec]
With the upper notations: the word2vec model $(E[i'\mapsto i],W'_0[i']^\ominus,C)$
is an \emph{exact compression} of the Reference Model $(\Id,W_0',C)$, such that the $i$-th and $i'$-th words of $W_0'$ are identical, i.e.:
\begin{equation}\label{eq:trivial:compress}
   \forall (k,k'):\: \softmax(x_k^T \, E[i'\mapsto i]\, W'_0[i']^\ominus) x_{k'} = x_k^T W_0' x_{k'}.
\end{equation}
\end{property}
We can have a qualitative look at $E[i'\mapsto i]$, the matrix of embeddings of this exact compression, and notice that it mapped $i$ and $i'$ on the same embedding vector.
And, qualitatively once more, the nature of words $xi$ and $x_{i'}$ is that they have the same ``probability vector", i.e. the same probability of occurrence of surrounding words. See \cite{implicitmatrix} and references therein to different perspectives on compression.

\paragraph{A frequentist (and confusing) viewpoint on synonyms.}
From a frequentist viewpoint, $x_i$ and $x_{i'}$ are interchangeable, this is the reason why the training algorithm will map them on the same embeddings. One could call them ``{synonyms from a word2vec viewpoint}'' or \emph{frequentist synonyms}.
For two words being frequentist synonyms it is enough that in the corpus all sentences containing $x_i$ have a similar sentence where it is replaced by $x_{i'}$.
This is clearly not the definition of semantic synonyms, for instance if all sentences containing the word ``\emph{bad}'' have an copy with the word ``\emph{good}'' in place, like:\\[.2em]
\begin{tabular}{l|lcl}
\rule{2em}{0pt}&\emph{look at this \framebox{bad} guy}& $\leftrightarrow$ &\emph{look at this \framebox{good} guy}.\\
&\emph{this is \framebox{bad} English} & $\leftrightarrow$ &\emph{this is \framebox{good} English}.\\
&etc.
\end{tabular}\\[.2em]
and these semantic antonyms will become synonyms in embeddings learned on this corpus.
Elaborating a little more on application, one can bet that on a corpus of cars, colors will probably be frequentist synonyms, since most cars can be seen in any color, hence such a corpus may have as many sentences like: ``a \emph{gray} car had an accident'' than ``a \emph{blue} car had an accident''. But on a corpus made of cooking recipes, colors will not be frequentist synonyms because there is no ``\emph{red} bananas'' but ``\emph{red} apples'' and ``\emph{red} peppers''.
We will empirically see in Section \ref{sec:finance} that if frequentist synonyms provide structure to a word2vec mode, they can confuse a task related to exploit polarity of sentiments of financial texts.

\subsubsection{A formal definition for compression of Markov chains}

The notion of Reference Model enables to define clearly what kind of compression can be expected from a text embedding, in particular for the word2vec class of models.
The vocabulary size being $V$, and the embedding dimension being $N$, compressing \emph{linearly} a Reference Model $(\Id,W'_0,C)$ using a word2vec will end up with a ${V\times N}$ matrix $R$ and a ${N\times V}$ matrix $R'W_0'$, such that the probabilities defined by $(R,R'W_0,C)$ are as close as possible from the original ones.
Qualitatively:
$$\forall x_i,x_j:\: \softmax(x_i^T RR'W_0')x_j \mbox{ ``close to'' } x^T_i W_0' x_j.$$
The main point is to choose a quantitative criterion that has a sense to define this desired proximity.
Qualitatively, it is clear that:
\begin{itemize}
    \item the rank of $RR'$ needs to be maximal (taking into account that the more frequentist synonyms in the language the lower the rank of $W'_0$) to recover the space spanned by $W_0'$,
    \item without the softmax non linearity and if the SVD (Singular Value Decomposition) of $W'_0$ is $U\Sigma V^T$, one could expect that $R\simeq U\Sigma$ and $R'\simeq V^T$ (see \cite[Section 4.2]{implicitmatrix} for more details).
    Nevertheless the softmax changes the setting. Moreover, the SVD compression criterion is the minimization of the unexplained variance whereas in word embedding, the proximity of conditional probabilities seems to be a better criterion.
\end{itemize}

\begin{definition}[Compression of a reference model $({\rm Id},W_0',C)$]
We define the compression to dimension $N$ of a reference model of size $V$ using two 
mappings $\phi(W'_0)$ and $\phi'(W'_0)$.
The compressed model is the $(\phi(W'_0),\phi'(W'_0),C)$ word2vec model.
For simplicity of the notation we set 
$$\Phi(W'_0):=\phi(W'_0)\phi'(W'_0)$$
and we define $\frakC$ the class of function $\Phi$ that can be written this way.
\end{definition}

If we consider the sequence of words $X_k$ of a text of length $T$ as a stochastic process, then we may take any distance $d(\cdot,\cdot)$ between probability vectors (i.e. $x_k^tW'_0$ or $\softmax(x_k^T WW'$) and write our optimization problem as
\begin{equation}
\label{eq:op.mv}
    \min_{R, R'}\; \frac{1}{T} \sum_{k=1}^{T} d(\softmax(X_k^T \phi(W'_0)\phi'(W'_0)), \ X_k^T W'_0)
\end{equation}
For linear compression, just note that $RR'W_0':=\Phi(W'_0)$. 
A last qualitatively remark: $\softmax(X_k^T \Phi(W'_0))$ and $X_k^T W'_0$ are both row vectors which coordinate $j'$ represents the probability of occurrence of the $j'$-th word of the vocabulary (in the neighborhood of $X_k=x_i$, i.e. if the word $x_{j'}$ occurs at least once in the next $C$ words after $X_k=x_i$). Hence the former will be a ``good compression'' of the later if the these two probability distributions are ``close".
\\
A natural choice is to consider the \emph{cross entropy} (that is a shifted version of Kullback-Leiber divergence) between these two distributions:
\begin{eqnarray}\label{eq:CE:basic}
    \CE(\Pr_{W'_0|X_k}, \hat\Pr_{\Phi(W'_0)|X_k})&:=& -\Esp_{X_k^T W'_0}\log\softmax(X_k^T \Phi(W'_0))\\\nonumber
    &=& -\sum_{x_{j'}\in\calV} \log\softmax(X_k^T \Phi(W'_0))x_{j'} \cdot X_k^T W'_0 x_{j'} 
\end{eqnarray}
Thanks to the ergodic theorem\footnote{i.e. $\lim_{T\rightarrow\infty}\frac{1}{T}\sum_t f(X_k)=\Esp_\mu f(X_i)$.} that can be used on the stochastic process made of the sequence of words when the length of the corpus goes to infinity, it is now possible to state this definition of a compression:
\begin{definition}[Compression criterion on a Markov chain generated text]
If a text $\frakX=X_1,\ldots,X_k,\ldots$ stems from a Markov generative model which Reference Model is $(\Id, W_0',C)$, we define a compression $\Phi$, taken in the class of functions $\frakC$, of its Markov kernel $W_0'$ thanks to the expectation (according to the invariant distribution $\mu$ of $W_0'$) of the Cross Entropy between the output vectors of a word2vec. 
It reads:
\begin{equation}\label{eq:compress:markov}
  \min_{\phi\in\Phi}\; \Esp_{X_k\sim\mu} \,\CE\left(\Pr_{W'_0|X_k}, \softmax(X_k^T \Phi(W'_0))\right).
\end{equation}
\end{definition}
Note that with the already defined notation $\hat\Pr_{\Phi(W'_0)|X_k}$ for $\softmax(X_k^T \Phi(W'_0))$, we recover a minimization that is compatible with (\ref{eq:CE:basic}).

Moreover this definition goes beyond the Skip-gram word2vec embeddings, since the compression $\Phi$ is quite generic at this stage. 
It is possible to extend it definition using to the natural filtration $\frakX_k$ associated to $\frakX$ and replacing 
$\softmax(X_k^T \phi(W'_0))$ by $\softmax(X_k^T \phi(W'_0, \frakX_k))$ in formula (\ref{eq:compress:markov}). It would need to change the writing of the expectation but it allows the compression to use all the words up to the $k$-th word of the text, and hence to embed a memory or local metrics, addressing for instance part of the Weakness 1 exposed in Section \ref{sec:PLM}.

\subsubsection{Convergence of the Skip-gram word2vec loss function to a cross-entropy}

Show that the mean of Skip-gram loss function converges to the criterion when the size of the corpus size goes to infinity, i.e. replacing it by something link $\mathbb{E}_t \ell(X_t)$ that would be close to (\ref{eq:compress:markov}).

\begin{theorem}[Correspondence between Skip-gram loss function and compression criterion]
\label{th:cross}
Assume the words $X_1,\ldots,X_T$ of a document of length $T$ are generated thanks to an ergodic stochastic process according to a Reference Model $(\Id,W'_0,1)$ and an initial probability distribution $m_0$ over the vocabulary.
Then the loss function of a Skip-gram word2vec model $(W,W',1)$ over this corpus converges towards the expectation of cross-entropy between $\Pr_{W'_0|X_k}$ and $\hat\Pr_{W,W'|X_k}$
\begin{equation}\label{eq:th:cross}
\lim_{T\rightarrow\infty} \frac{1}{T}\sum_{k=1}^{T} -\log \softmax(X_{k}^TW W')X_{k+1} = \Esp_{X_k\sim\mu} \CE(\Pr_{W'_0|X_k}, \hat\Pr_{W,W'|X_k}).
\end{equation}
\end{theorem}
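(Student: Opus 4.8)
The plan is to recognize the summand as a function of the consecutive pair $(X_k,X_{k+1})$ and to apply the ergodic theorem to the associated pair chain. First I would fix the parameters $W,W'$ and introduce the function $f(x_i,x_j):=-\log\softmax(x_i^TWW')x_j$ on $\calV\times\calV$; it is finite everywhere because the softmax output is strictly positive, and since $\calV$ is finite, $f$ is bounded. With this notation the empirical loss is exactly $\frac1T\sum_{k=1}^{T} f(X_k,X_{k+1})$.

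Second, since $C=1$ the Reference Model $(\Id,W'_0,1)$ is, as observed just before the ``Representative generative model'' theorem, nothing but the Markov generative model with transition kernel $K:=W'_0$; by hypothesis this chain is ergodic, hence irreducible and (the state space being finite) positive recurrent, with unique stationary distribution $\mu$. I would then reuse the observation already made in the proof of the Property on the convergence of the empirical conditional distribution: $(X_k,X_{k+1})_k$ is itself an irreducible Markov chain on $\{(x,y):K(x,y)>0\}$ with kernel $Q((a,b),(c,d))=K(c,d)\mathbbm{1}(b=c)$ and unique invariant law $\pi(x_i,x_j)=\mu(x_i)K(x_i,x_j)=\mu(x_i)\,x_i^TW'_0x_j$.

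Third, I apply the ergodic theorem to $f$ along the pair chain --- either directly, or by writing $f(X_k,X_{k+1})=\sum_{i,j}f(x_i,x_j)\,\mathbbm{1}(X_k=x_i,X_{k+1}=x_j)$ and using linearity together with the cited Property, the sum over $(i,j)$ being finite. This yields
\begin{equation*}
\frac1T\sum_{k=1}^{T} f(X_k,X_{k+1})\;\underset{T\to\infty}{\longrightarrow}\;\sum_{i,j}\pi(x_i,x_j)f(x_i,x_j)=\sum_{i}\mu(x_i)\sum_{j}\big(x_i^TW'_0x_j\big)\,f(x_i,x_j).
\end{equation*}
By the very definition (\ref{eq:CE:basic}), the inner sum equals $-\sum_j (x_i^TW'_0x_j)\log\softmax(x_i^TWW')x_j=\CE(\Pr_{W'_0|x_i},\hat\Pr_{W,W'|x_i})$, so the right-hand side is $\Esp_{X_k\sim\mu}\CE(\Pr_{W'_0|X_k},\hat\Pr_{W,W'|X_k})$, which is (\ref{eq:th:cross}).

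Two minor points need bookkeeping: the summand at $k=T$ involves $X_{T+1}$, so one either assumes the document has $T+1$ words or notes that this single boundary term is $O(1/T)$ and vanishes; and the ergodic convergence is stated under the chain's own stationary law whereas the text starts from $m_0$, which is harmless because for an irreducible positive recurrent chain the ergodic averages converge almost surely for every initial distribution. The only step deserving genuine care is the passage from the indicator-level ergodic statement to the bounded function $f$, and this is immediate here precisely because $\calV$ is finite and $f$ is bounded, so no integrability obstruction arises --- the ``main obstacle'' is thus essentially organizational rather than analytic.
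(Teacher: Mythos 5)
Your proof is correct and follows essentially the same route as the paper: both arguments reduce the empirical loss to an ergodic average over the pair chain $(X_k,X_{k+1})$, whose invariant law $\pi(x_i,x_j)=\mu(x_i)\,x_i^TW_0'x_j$ turns the limit into $\sum_i\mu(x_i)$ times the cross-entropy. Your additional bookkeeping (boundedness of the summand, the boundary term at $k=T$, and independence from the initial law $m_0$) is sound and in fact slightly more careful than the paper's own write-up, but it does not change the substance of the argument.
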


\begin{proof}
For the sake of notations, we will restrict the theorem and its proof to $C=1$. 
Note $\mu$ as the 
ergodic measure of the stochastic process $(X_k)_k$,
i.e. $\mu(x_i) = \lim_{T \to \infty}\frac{}{T} \sum_{k=1}^{T} \mathbbm{1}(X_{k}=x_i)$.
Then we can write the loss function
\begin{align*}
   -\frac{1}{T}&\sum_{k=1}^{T} \log \softmax(X_{k}^TW W')X_{k+1}\\ 
   =& -\sum_{i=1}^V\frac{\sum_{k=1}^{T} \mathbbm{1}(X_{k}=x_i)}{T}
   \sum_{j=1}^V \frac{\sum_{k=1}^{T} \mathbbm{1}(X_{k}=x_i, X_{k+1} = x_j)}
   {\sum_{k=1}^{T} \mathbbm{1}(X_{k}=x_i)}\log \softmax(x_{i}^T W W')x_j   
\end{align*}
With the notations $\Pr_{W_0'|x_i}$ for the probability distribution of words given the occurrence of $x_i$ (i.e. $x_i^T W'_0$) and $\hat\Pr_{WW'|x_i}$ for the probability of the same events modelled by the Skip-gram word2vec (i.e. $\softmax(x_i^T W W')$). 
Then limit when the number of words goes to infinity of the Skip-gram loss reads
\begin{align*}
   -\lim_{T\to\infty}& \frac{1}{T}\sum_{k=1}^{T} \log \softmax(X_{k}^TW W')X_{k+1} \\
   = & -\sum_{i=1}^V \mu(x_i) \sum_{j=1}^V \frac{\mu(x_i)x_i^T W'_0 x_j}{\mu(x_i)}\log \softmax(x_{i}^TW W')x_j\\
   =& \sum_{i=1}^V \mu(x_i) \; \CE(\Pr_{W_0'|x_i}, \softmax(x_i^T W W')) \\
   =& \Esp_{X_k\sim\mu} \CE(\Pr_{W_0'|X_k}, \hat\Pr_{W W'|X_k}). 
\end{align*}
\end{proof}
When $C>1$ the deduction is similar. 

\section{Synthetic experiments: empirical study of the role of structures}
\label{sec:synth:experiments}

In this section, we will leverage on different elements of our theoretical analysis to perform numerical explorations around the identifiability of word embedding models. We mainly leverage on these two elements
\begin{itemize}
    \item under restrictive assumption, we can use a Markov chain with a kernel $K$ to build a Reference Model $(\Id,W_0',C)$ thanks to equation (\ref{eq:MK});
    \item the existence of \emph{frequentist synonyms} influences the capability to compress efficiently a Reference Model.
\end{itemize}
We will hence generate different Reference Models having more or less structure (here structure means having blocks of frequentist synonyms or not), on vocabularies of different sizes, and observe how easy or difficult it is to recover them from a generated corpus. 
If the compression performed by a word2vec is not a principal component analysis (since it is more ``low rank" than ``low variance" driven, because of the nonlinearity introduced by the softmax and because the minimized criterion in a cross entropy with an unknown Reference Model via a sample of text), it can nevertheless be expected to find commonalities with the pitfalls identified long ago by Random Matrix Theory (see \cite{tao2012topics} for an overview) that are playing an important role in the theoretical understanding of the limits of deep learning \cite{choromanska2015loss}.


\paragraph{Experimental conditions.}
We generate different Reference Models on vocabularies of size $V$, that will be ``compressed" via embeddings of dimension $N$ (for illustration purposes we even consider some configurations for which $N>V$). We use a Markov chain to build the reference model so that it is ``fully random" (uniformly generated), or has a structure. This structure is made of ``blocks" which rows are identical to each other (that is the definition of frequentist synonyms).

We train a Skip-gram word2vec during a large number of epochs (enough to stabilize the learning) via a standard SGD (Stochastic Gradient Descent, generally Adam with a fixed rate of $10^{-4}$) implemented in pyTorch and running on Google Colab or AWS using {\sc cuda} acceleration.

\paragraph{Types of structure.}
We experiment different level of structure:
\begin{itemize}
    \item \emph{No structure} when $K$ is a dense matrix generated randomly and uniformly in the space of stochastic matrices each of its rows is sampled by the uniform distribution in the space of simplex using Dirichlet distribution with $\alpha=1$.
    \item \emph{Structure} when $K$ is made of blocks of duplicated rows, when the number of block varies, the overall size of the structure is always kept constant.
    For instance: when $V=1000$ we can take either $160$ blocks of $5$ rows or $40$ block of $20$ rows, because $150\times 5 = 40\times 20$.
\end{itemize}
The expected role of blocks is the following: the intrinsic dimension of a Reference Model made of $B$ blocks of size $S$ plus ``noise" on $V-B\times S$ components is between $B+1$ and $B+(V-B\times S)=V-B(S-1)$. it is expected to be closer to $B+1$ in a ``low overfitting'' configurations and closer to $V-B(S-1)$ when the model considers that it is equivalent to learn one row or $S$ similar rows. Our theoretical analysis suggests that the Skip-gram word2vec should be more in the first configuration that in the second, since attributing one vector of embedding to a block is far more rewarding (in terms of the loss function) than attributing it to one isolated row of $W_0'$.

\paragraph{Criteria to monitor.}
We focus on two criteria:
\begin{itemize}
    \item The loss function during the learning, to analyse the performance of the compression. It is interesting to note that the loss function (\ref{eq:loss:skipgram}) being compatible with the cross entropy (\ref{eq:compress:markov}), and since we are generating synthetic dataset, in some case one can really expect a full success of the ``compression" that is in these cases a simple identification of the generating model.
    \item The distance between the compressed vectors and the original ones. Once again it is a way to quantify the success of the identification of the properties of the generative model.
    
    That for, we focus on the (mean) cosine similarity between two words $x_i$ and $x_{i'}$ defined as the scalar product between the two probability vectors:
    $$\CS^P(x_i,x_{i'}):= \sum_{j=1}^V \softmax(x_i^T WW') x_j \cdot x_{i'}^T W_0' x_j.$$
    To formulate a criterion that is minimal when two groups $G_1$ and $G_2$ of words are close, we will use
    \begin{equation}
    \label{eq:crit}
        d_C(G_1,G_2):=1-\frac{1}{\#G_1\cdot \#G_2}\sum_{x_i\in G_1}\sum_{x_{i'}\in G_2} \CS^P(x_i,x_{i'}).
    \end{equation}
    This metric will allow us to understand if words belonging to the same group of frequentist synonyms are closer between themselves rather than to other words.
\end{itemize}

\subsection{First experimentation: low vocabulary size}

Figure \ref{fig:cosine50} shows that the Skip-gram word2vect has difficulties to find the generative $W'_0$ when the dimension of the word2vec is larger than the original dimension. This exhibits a clear identifiability issue.
Of course in general the size of the embeddings is lower than the vocabulary size, so for word2vec this configuration should never occur in practice. Nevertheless, not that the relative size of the embeddings vs. the one of the vocabulary is not low for embedding-driven language models like BERT.

\begin{figure}[!ht]
    \begin{center}
    \includegraphics[width=.48\linewidth]{{cosine_blk=8_V=50_Lk=1000.0}.pdf}
    \includegraphics[width=.48\linewidth]{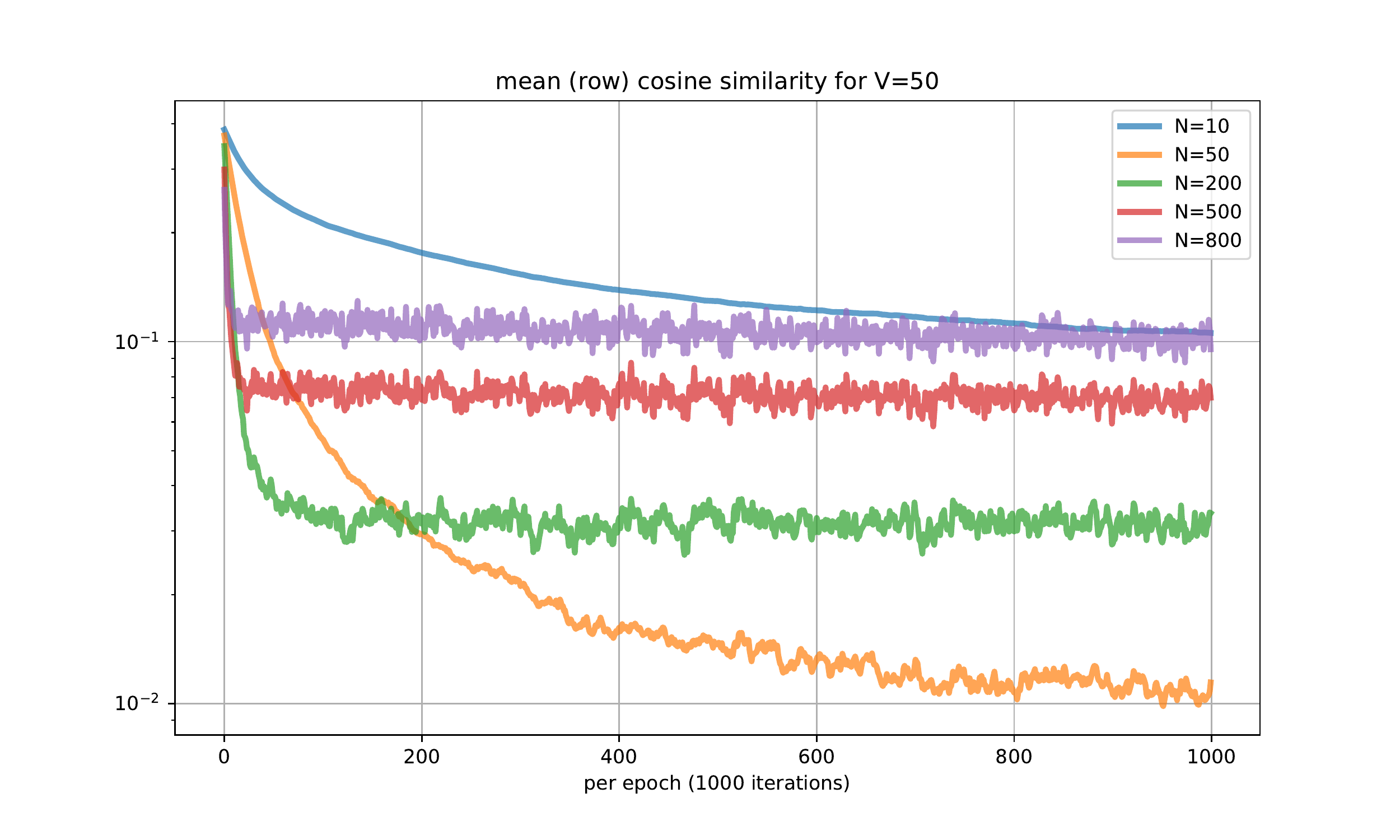}
    \end{center}
    \caption{The role of structure: cosine distance between $W_0'$ and $WW'$ during the learning ($x$-axis), for a vocabulary size $V=50$ and for 8 blocks of 5 (left) and no block (right).}
    \label{fig:cosine50}
\end{figure}

When the dimension of the embeddings is low, Figure \ref{fig:cosine200} suggests that the structure can be captured by the compression. 
Typically one could expect that an embedding of dimension $10$ has more chances to capture a Reference Model made of 8 blocks, than one made of 32 blocks, and it is verified in our experiments.

\begin{figure}[!ht]
    \begin{center}
    \includegraphics[width=.48\linewidth]{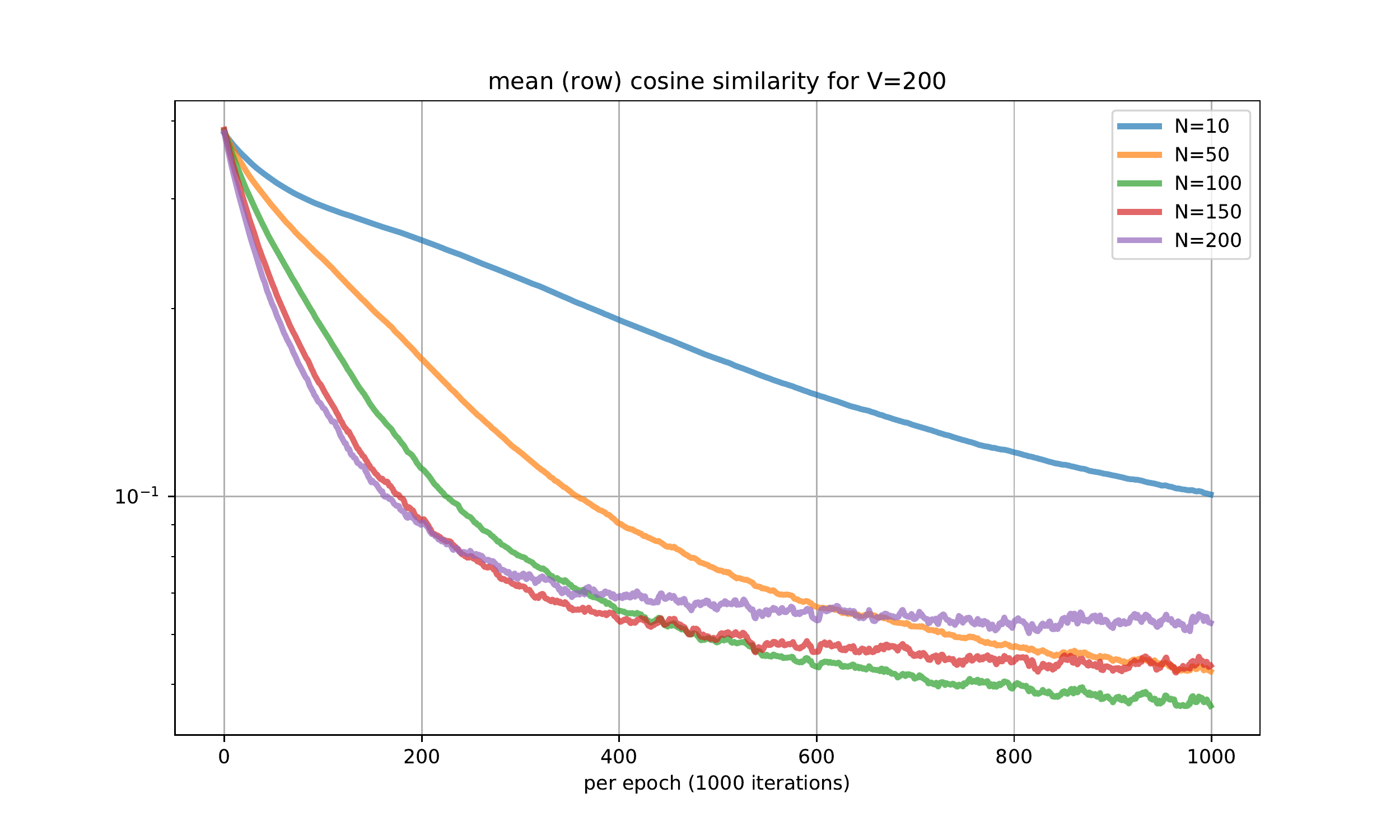}
    \includegraphics[width=.48\linewidth]{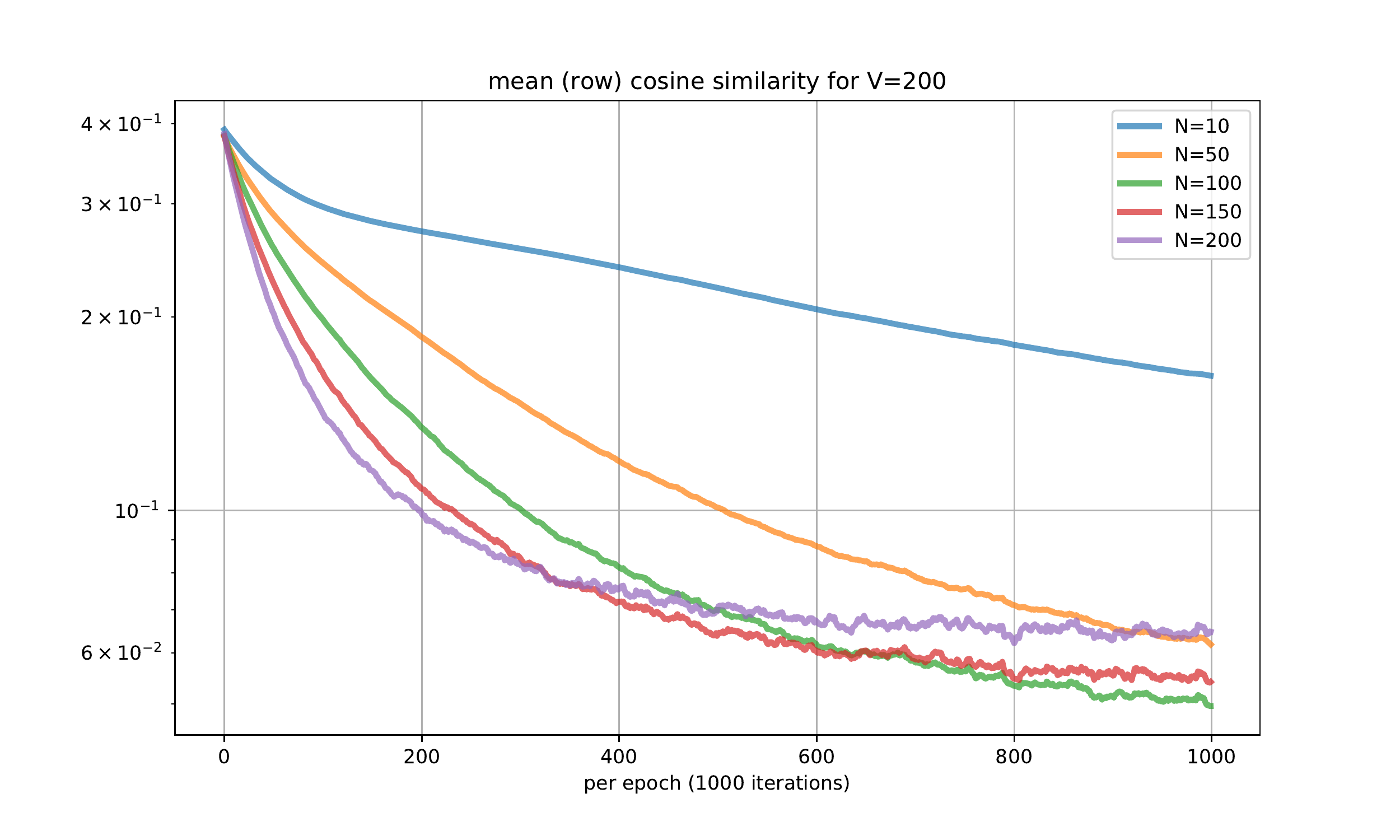}
    \\    
    \includegraphics[width=.48\linewidth]{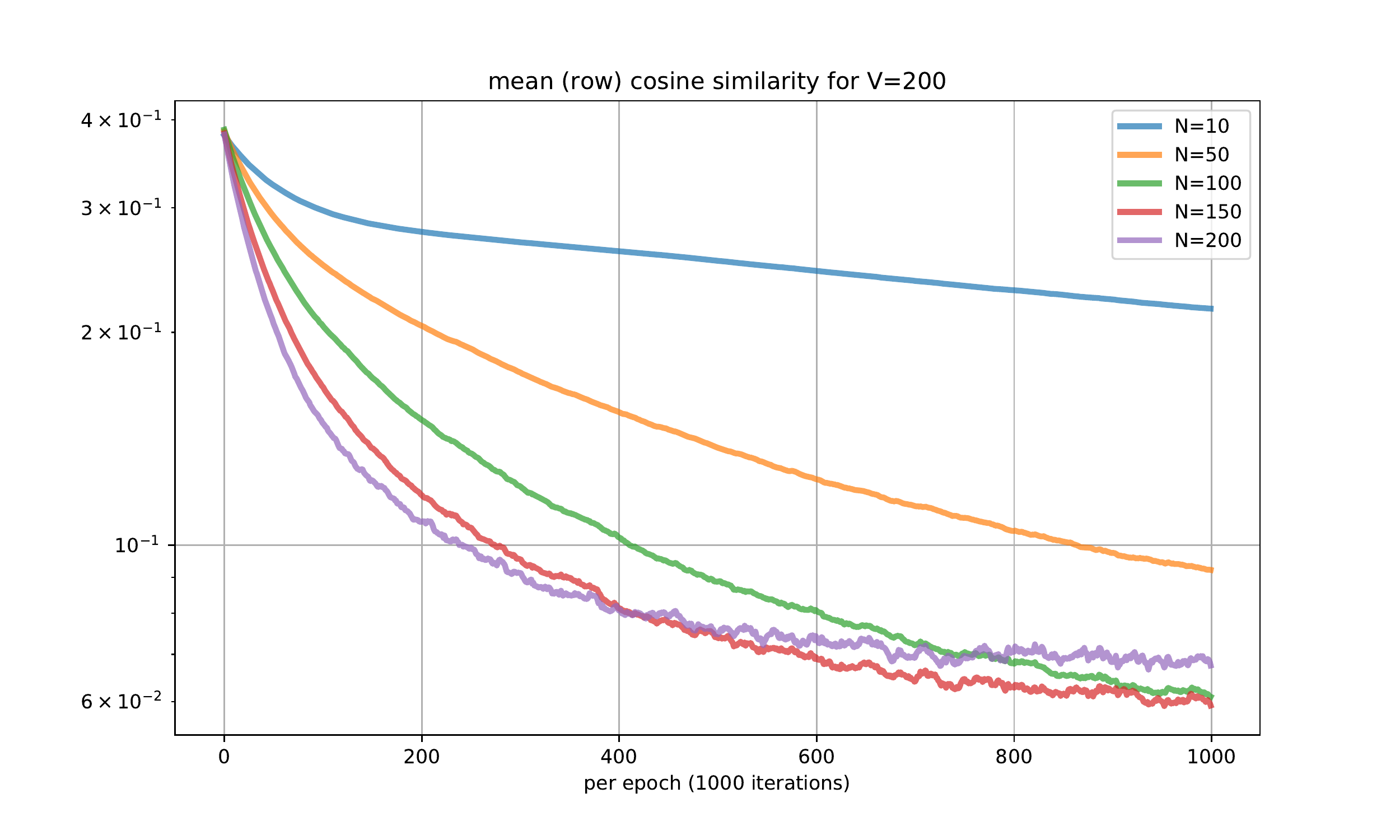}
    \includegraphics[width=.48\linewidth]{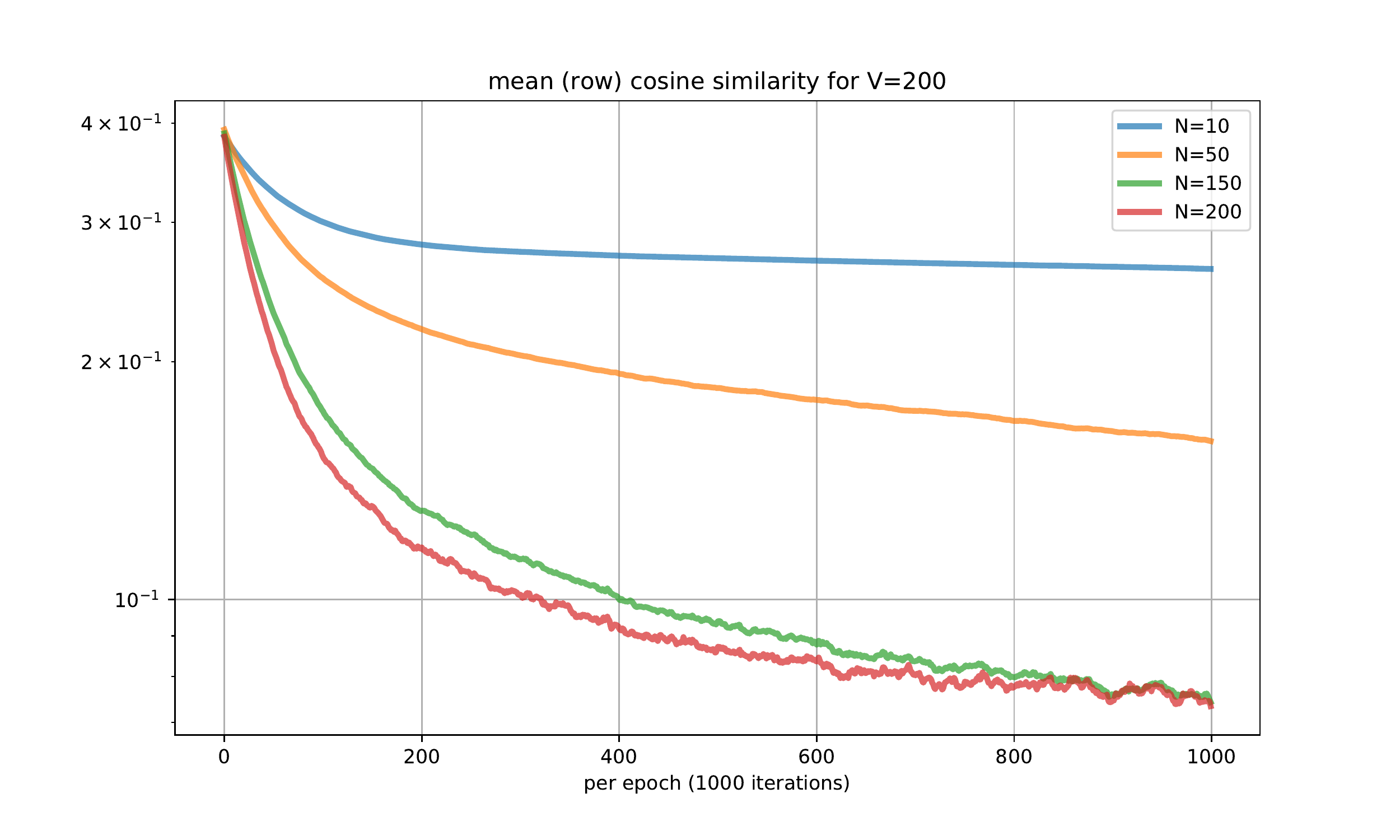} 
    \end{center}
    \caption{The role of structure: cosine distance between $W_0'$ and $WW'$ during the learning ($x$-axis), for a vocabulary size $V=200$ and for 8 blocks of 20 (top left), 16 blocks of 10 (top right), 32 blocks of 5 (bottom left) and no block (bottom right).}
    \label{fig:cosine200}
\end{figure}

In any case, Figure \ref{fig:cosine1000} underlines the fact that higher dimension of embeddings, even if it is lower than the vocabulary size, is better in presence of structure. For instance when the Reference Model is made of 160 blocks of 5 rows (i.e. 800 rows over 1000 of the Reference Model are exhibiting structure): embedding size of $200$ and $500$ perform similarly (with more noise for $N=500$) whereas $n=800$ performs very poorly. 
This underlines the fact that the structure has to be taken into account to choose the targeted compression dimension; in a highly structure language (with a lot of frequentists synonyms), a low dimension can be a proper choice. We can expect that very repetitive sentences (from a semantic perspective), like the headlines of financial News, should exhibit identifiability issues compared to the body of the same news, that are made of more semantically diverse sentences.

\begin{figure}[!ht]
    \begin{center}
    \includegraphics[width=.48\linewidth]{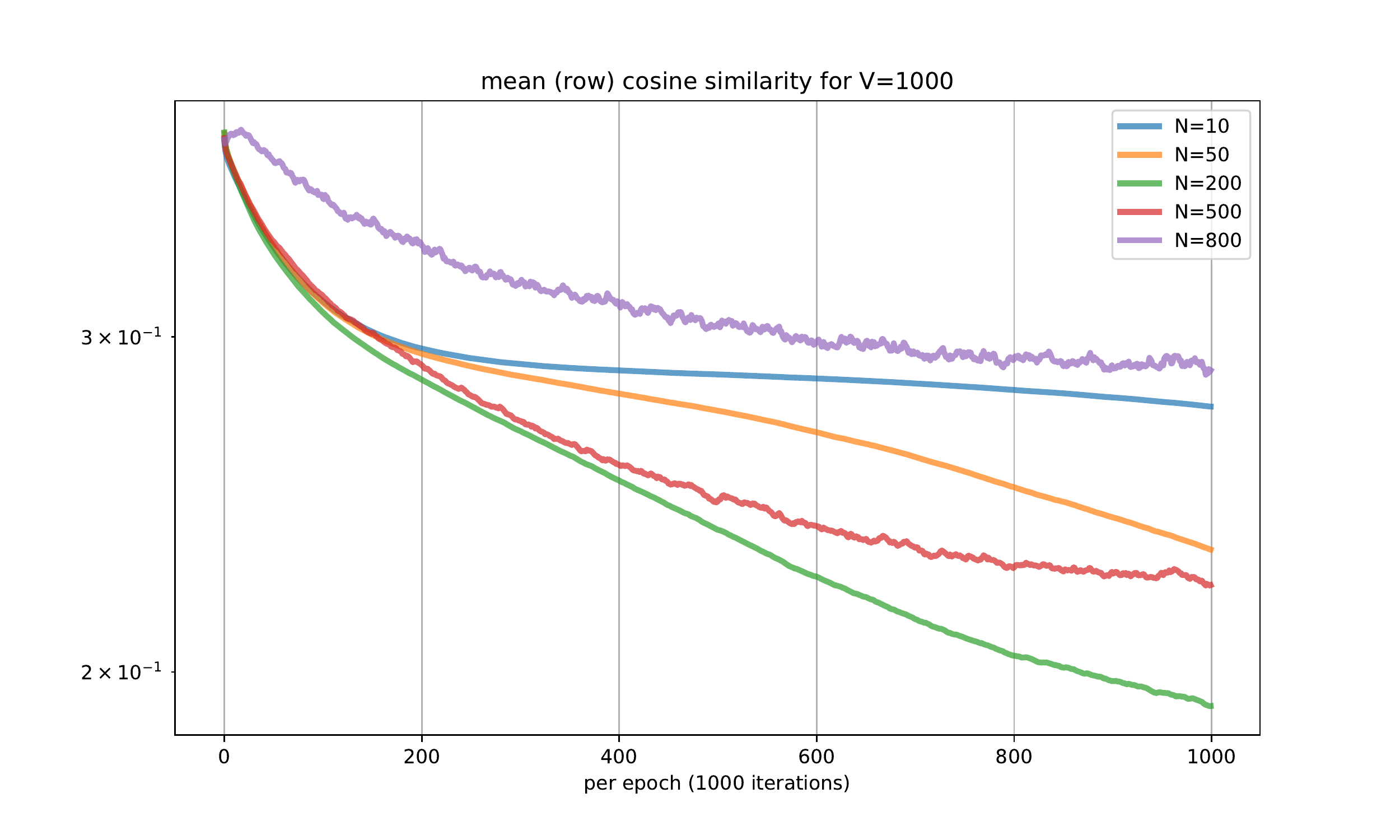}
    \includegraphics[width=.48\linewidth]{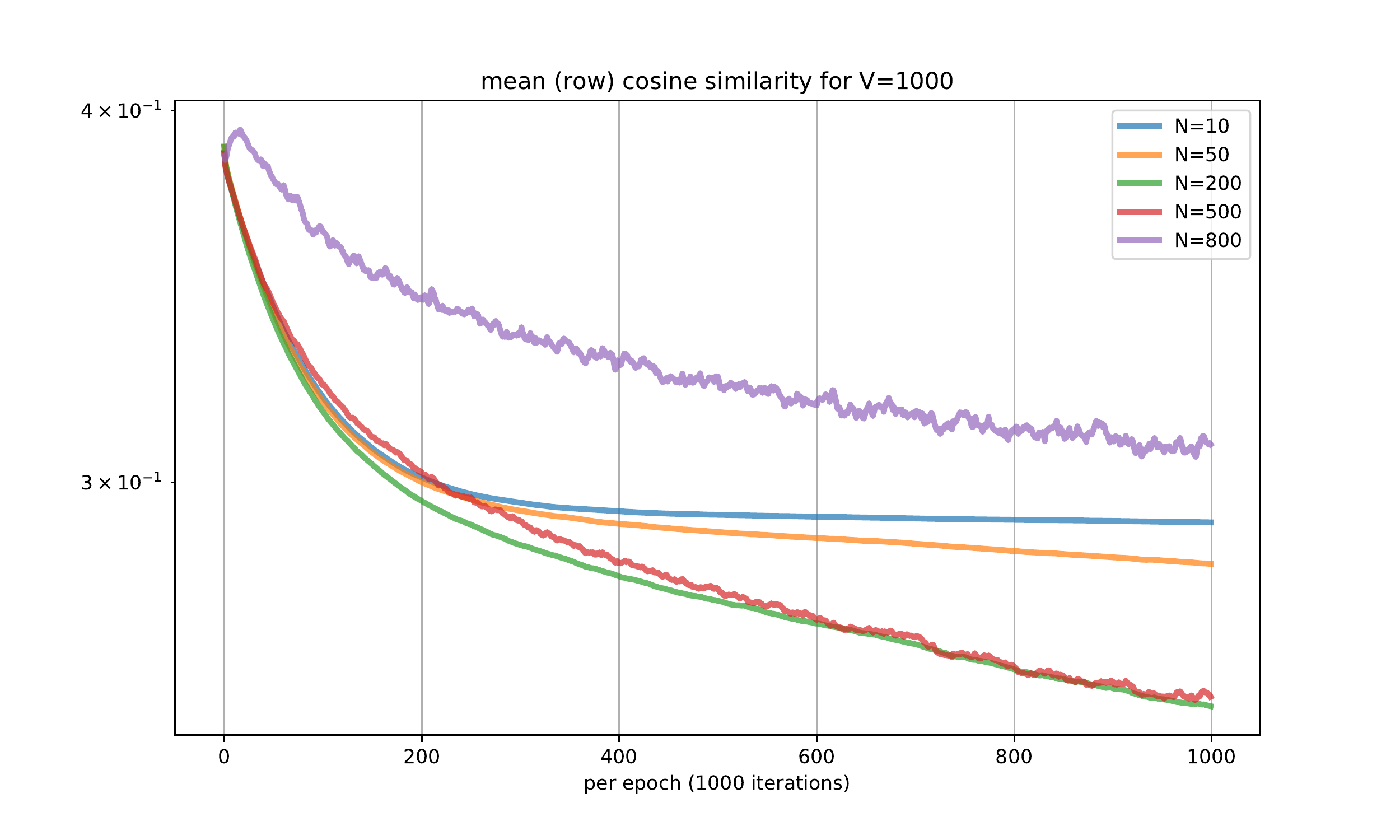}
    \\    
    \includegraphics[width=.48\linewidth]{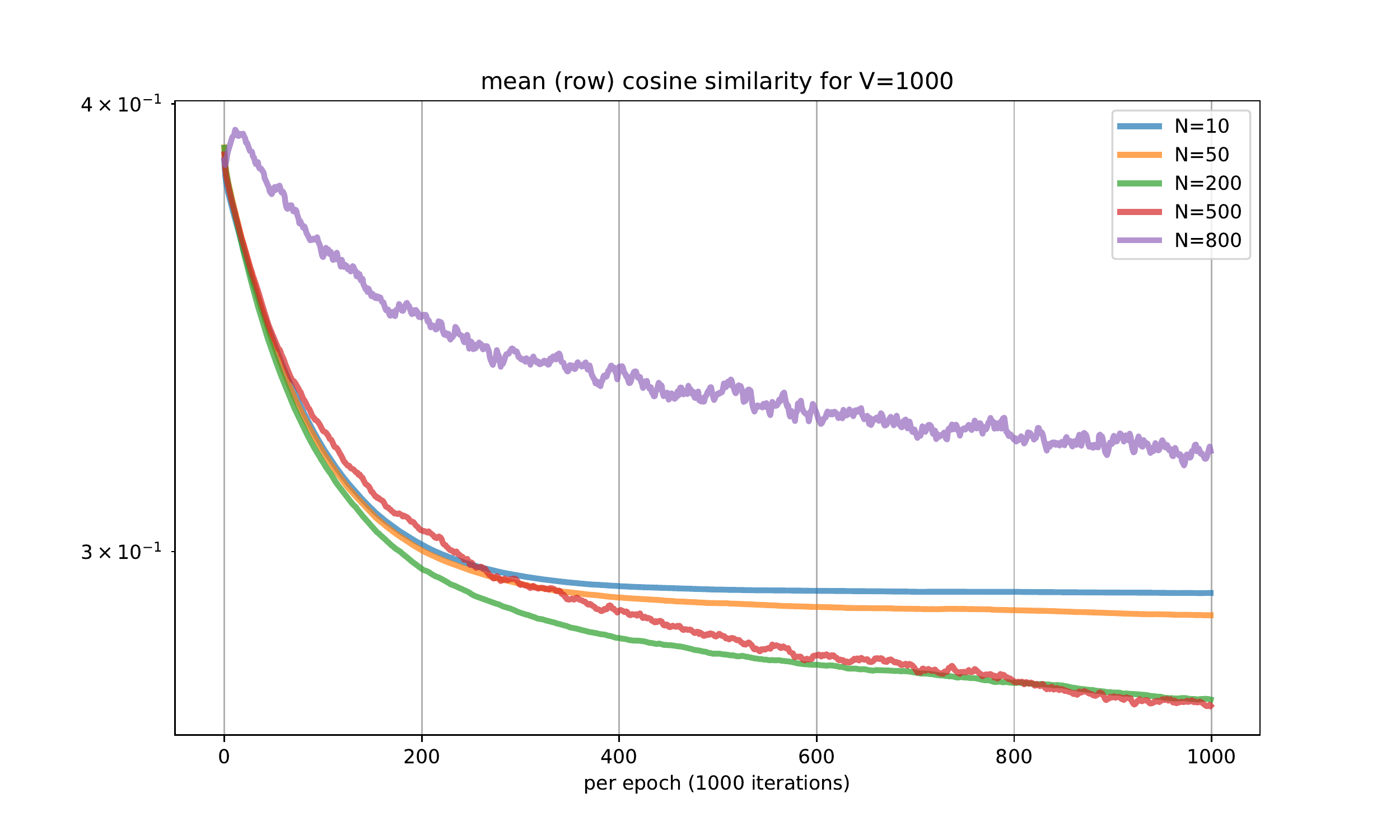}
    \includegraphics[width=.48\linewidth]{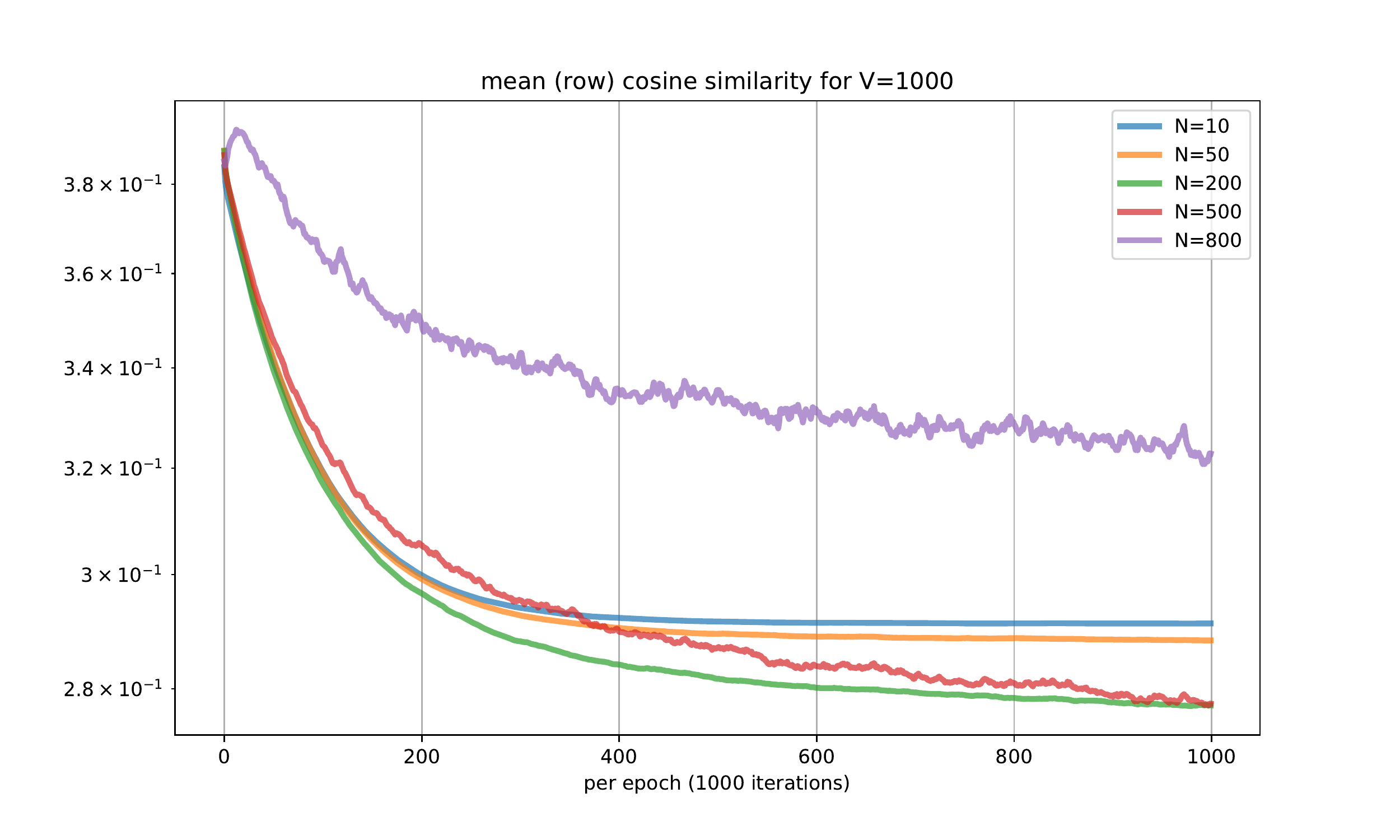}
    \end{center}
    \caption{The role of structure: cosine distance between $W_0'$ and $WW'$ during the learning ($x$-axis), for a vocabulary size $V=1000$ and for 10 blocks of 80 (top left), 40 blocks of 20 (top right), 160 blocks of 5 (bottom left) and no block (bottom right).}
    \label{fig:cosine1000}
\end{figure}

\subsection{Beyond the loss function: assessing the quality of the captured structure}

\begin{figure}[!ht]
    \centerline{\emph{8 block of size 5 for a vocabulary size of 50.}}
    \includegraphics[width=\linewidth]{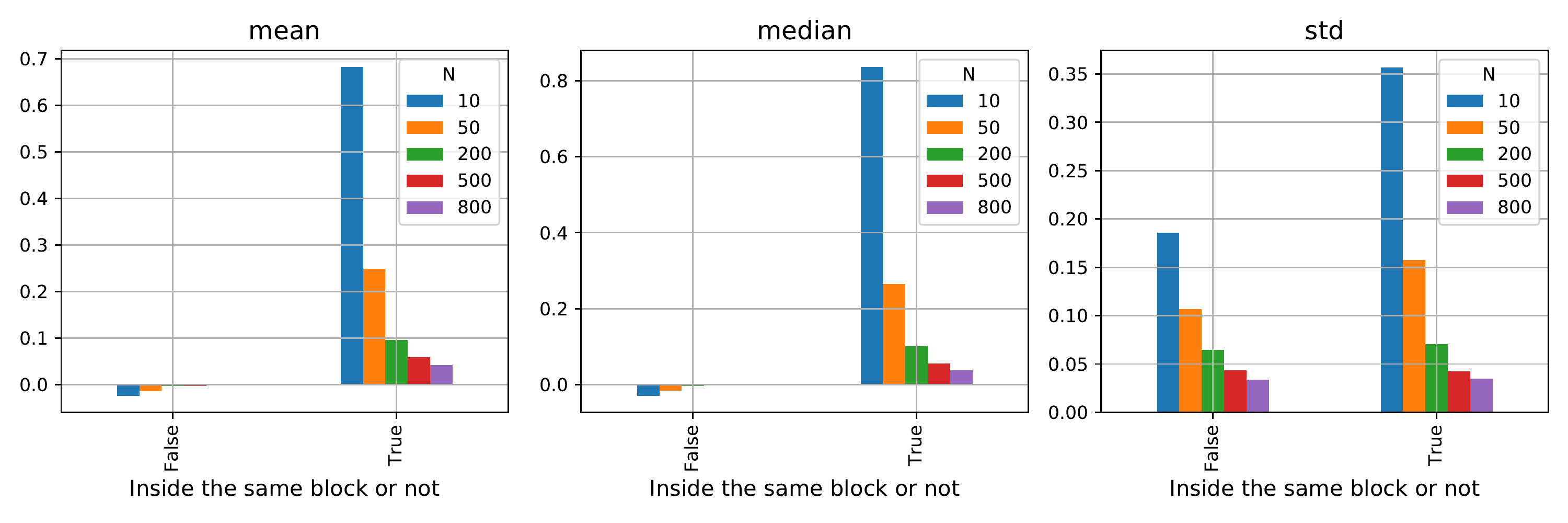}\\
    \centerline{\emph{40 blocks of size 20 for a vocabulary size of 1000.}}\\
    \includegraphics[width=\linewidth]{{in_out_synonyms_blk=40_V=1000_Lk=1000.0}.pdf}\\
    \centerline{\emph{160 blocks of size 5 for a vocabulary size of 1000.}}\\
    \includegraphics[width=\linewidth]{{in_out_synonyms_blk=160_V=1000_Lk=1000.0}.pdf}
    \caption{The recovery of structure: inter-block and intra-block cosine similarities of embeddings.}
    \label{fig:synthetic:blocks}
\end{figure}

Figure \ref{fig:synthetic:blocks} exhibits how the compressed model $WW'$ recovers the blocks of the Reference Model $W'_0$. The standard deviations inside a block or between words of another block are similar, and as expected the cosine similarity between words belonging to the same block is higher than the cosine between words of two different blocks. Nevertheless if this cosine is large in low dimension (0.8 in the best case of $V=50$) it is far lower in high dimension. It is 0.02 when $V=1000$ for $N\leq 500$, but it is halves when $N=800$, showing that overfitting is really present in such a case.

The qualitative conclusions to this empirical study for synthetic data are
\begin{itemize}
    \item the dimension of the embedding has to cope with the structure of the Reference Model, with a highly structured Reference Model, a too large embedding dimension is detrimental to the performances;
    \item the skip-gram word2vec faces identifiability issues in general, and we suspect it is not specific to this model. 
    \item Surprisingly, even if the performances of the model is not good, it succeeds in putting words closer to frequentist synonyms (i.e. within the same block) than to other words. 
\end{itemize}
Our experiments on the dimensionality of word embeddings seems to be compatible with \cite{dim} that is in favor of a dimension close to $N=300$. 

\section{Experiments on a corpus of financial news: What can be learned?}
\label{sec:finance}

This section is an empirical study learning embeddings on a financial corpus. Our goal is to explore how particular type of structure, that is usually named \emph{sentiment polarity} in NLP, is preserved or not by embeddings.

The usage of NLP in finance often targets to make the difference between ``good news'' and ``bad news'' on listed company, as a pre-requisit to build investment strategies (see \cite{gentzkow2019text} and \cite{xing2018natural} for details).
The ``Loughran-McDonlad Lexicon'' tuned by human experts on 10-K regulatory fillings by US corporates (see \cite{loughran2011liability} for details), is commonly used to quantify the polarity of text. 
Since it lists positive, negative, litigious words, we will use these lists of words as blocks of synonyms or antonyms. They study play a similar role as the ``blocks'' of the generative models studied in the previous section.

\subsection{Sentiments as a source of structure: Does Financial News understand better Loughran-McDonald's polarity than Wikipedia?}

\begin{table}[!h]\small
    \centering
\begin{tabular}{lrrrrrr}
\toprule
 & \multicolumn{3}{c}{News full text} & \multicolumn{3}{c}{News headlines} \\
 & avg. LM & avg. size &       nbe &   avg. LM & avg. size &       nbe \\
\midrule
2008 &    0.26 &    156.18 &   415,524 &      0.24 &     10.14 &   406,315 \\
2009 &    0.25 &    146.84 & 2,780,525 &      0.23 &     10.20 & 2,497,544 \\
2010 &    0.22 &    173.07 & 2,911,113 &      0.21 &     10.17 & 2,651,780 \\
2011 &    0.20 &    354.10 & 4,038,762 &      0.19 &      9.65 & 1,937,799 \\
2012 &    0.19 &    387.70 & 4,650,571 &      0.23 &     10.23 &   132,126 \\
2013 &    0.19 &    370.67 & 4,797,842 &      0.21 &      9.96 &   187,842 \\
2014 &    0.19 &    360.51 & 4,785,053 &      0.18 &      9.90 &   195,455 \\
2015 &    0.20 &    374.79 & 4,763,002 &      0.16 &      9.12 & 1,303,531 \\
2016 &    0.24 &    242.99 & 3,716,652 &      0.18 &      9.54 & 2,763,163 \\
2017 &    0.22 &    133.33 & 3,401,207 &      0.15 &      9.89 & 2,191,320 \\
2018 &    0.21 &    129.60 & 3,357,355 &      0.14 &      9.75 & 1,980,199 \\
2019 &    0.17 &    160.64 & 1,634,809 &      0.15 &     10.00 & 1,545,239 \\
2020 &    0.17 &    175.42 & 1,388,059 &      0.15 &      9.97 & 1,449,560 \\
\bottomrule
\end{tabular}
\caption{Descriptive statistics of the processed News full text
and Headlines (\emph{avg. LM} is the average number of words from the Longhran-McDonald lexicon in a document of the corpus
, \emph{avg. size} is the average number of words, and \emph{nbe} is the number of news).}
    \label{tab:desc:bloom}
\end{table}

We use a corpus of financial News provided by a large provider of professional financial News covering years from 2008 to 2020. We will do experiments on the headlines only or on the whole body of the News (i.e. the headline followed by the text). We used different ad hoc filters to prevent the repetition of News (sometimes a News is repeated, in such a case we only keep its first appearance, and some times a header does not have enough words, in such a case we remove it from the headline corpus).
Table \ref{tab:desc:bloom} provides descriptive statistics on these corpora; it is restricted to the News that will be processed by our word2vec: the headlines contain on average 10 words where the body of the news contains on average around 250 words. Our filtering reduced the number of headlines years 2008, 2012, 2013 and 2014. Because of this we will restrict some of our analysis to other years only.

As a reference, we use embedding pre-trained in Wikipedia (see Appendix \ref{sec:tech:appendix} for details).

The content of the Loughran-McDonald Sentiment Word Lists (2018) is described in Table \ref{tab:LM:desc}. This lexicon contains more negative words than positive words; that is the two categories we will mainly focus on. Not all these words appear in the considered corpora: for instance embeddings trained on Wikipedia contains only 349 of the 353 positive words where the headlines of our financial News from 2008 to 2012 contains 343 of them.

Here are examples of the first words of this lexicon:
\begin{itemize}
\item Negative: \emph{abandon, 
abandoned,
abandoning,
abandonment,
abandonments,
abandons,
abdicated,
abdicates,} etc.
    \item Positive; \emph{able,
abundance,
abundant,
acclaimed,
accomplish,
accomplished,
accomplishes,
accomplishing,
accomplishment,} etc.
\end{itemize}

\begin{table}[!ht]
    \centering
    \begin{tabular}{c|r|rrr}
                   &    LM    & in wiki-& in Fin. News & in Fin. News\\
                   &    (ref) & pedia & headlines & Full text\\\hline 
Negative words     & 2,354 &     2,097 & 2,190 & 2,335\\
Positive words     &   353 &       349 &   343 & 354\\\hline\hline
Uncertainty words  &   396 &       270 &   273 & 294\\
Litigious words    &   903 &       590 &   637 & 820\\
StrongModal words  &    19 &        18 &    19 & 19\\
WeakModal words    &    27 &        26 &    26 & 27\\
Constraining words &   183 &       172 &   178 & 181\\ \hline
    \end{tabular}
    \caption{Number of word in each category present in each corpus.} 
    \label{tab:LM:desc}
\end{table}

Figure \ref{fig:avg:cosin:glob} exhibits the cosine similarity between different sections of polarized words (as it is a scalar product between two vectors of norm 1, its maximum value is 1). The exact computation process is the following: (1) we trained Skip-gram word2vec embedding on each year, (2) for each year we compute the cosine similarity between any two words of the considered lists (for instance between any two positive words, or between any positive word and any negative word), (3) we average over all the obtained cosine similarities (we plot the obatined time series in Figure \ref{fig:yearly:cosine}), (4) we average the obtained numbers excluding years  2008, 2012, 2013 and 2014 (Figure \ref{fig:yearly:cosine} shows that average cosine similarity is too different over these years for embeddings learned on headlines, probably because of the low number of headlines available).
\\
If the learned embeddings would have been compatible with the polarities expressed by the lexicon, we should observe
\begin{itemize}
    \item The highest average similarity for groups \emph{Positive-Positive} and \emph{Negative-Negative} (blue and orange bars),
    \item The lowest average similarity for the \emph{Positive-Negative} group (violet),
    \item Medium similarities for groups \emph{Positive-Random} and \emph{Negative-Random} (red and green).
\end{itemize}
\begin{figure}[!h]
    \centering
    \includegraphics[width=.8\linewidth]{}
    \caption{Average cosine similarity between embeddings of different groups of polarities trained on financial News or on Wikipedia (\emph{Random} means either random words --for Financial News-- either all the other words --for Wikipedia--).}
    \label{fig:avg:cosin:glob}
\end{figure}
\emph{It is not what we observe}:
\begin{enumerate}
    \item if Positive-Positive and Negative-Negative (blue and orange bars) are the highest, it is not by far, especially for embeddings learned on the headlines.
    \item \emph{Positive-Negative similarity (violet) is never the lowest}, it is comparable to distance to random words for embeddings
    \item Positive-Random and Negative-Random (red and green) are not very low \emph{except for embeddings learned on the full text} of the News.
\end{enumerate}

\paragraph{Conjectures on frequentist synonyms amongst polarized worlds in finance.}
The first empirical conclusion we can make at this stage is that \emph{it is difficult for embeddings to separate financial polarized antonyms}.
The fact that the Positive-Negative (violet) bars are never lower than the Positive or Negative vs. Random (red and green) bars is an evidence of this difficulty. Nevertheless, \emph{it is easier to separate financial polarized words with embeddings learned on the full text of News rather than on headlines only}. We can conjecture that it is because financial headlines are written to be quickly understood by humans, and that for they contain similar sentences with positive or negative words in the same environments. Indeed: when a corpus is structure with short sentences of similar structure, the chances that semantic antonyms becomes frequentist synonyms are high. This configuration disappear when the corpus of full text (body) of financial News is used.
Last but not least Positive-Random and Negative-Random similarity is particularly close to zero when embeddings are learned on the full body of the News. This similarity is even lower for embeddings trained on the full text of financial News rather than on the Wikipedia corpus.

Our theoretical analysis shed light on this: it is difficult for the Skip-gram word2vec to make the difference between frequentist synonyms. From a corpus of short and very structured sentences, antonyms have good chances to become frequentist synonyms.

As a conclusion of this first analysis, we can conclude that using embeddings to discriminate polarity will be more difficult on a the corpus of headlines, that is not diverse enough, and financial enough (since it does not work is well on Wikipedia).

\subsection{Stationarity of embeddings associated to polarized words}

\subsubsection{The cosine similarity between polarities is stationary}

\begin{figure}[!h]
    \centering
    \includegraphics[width=.985\linewidth]{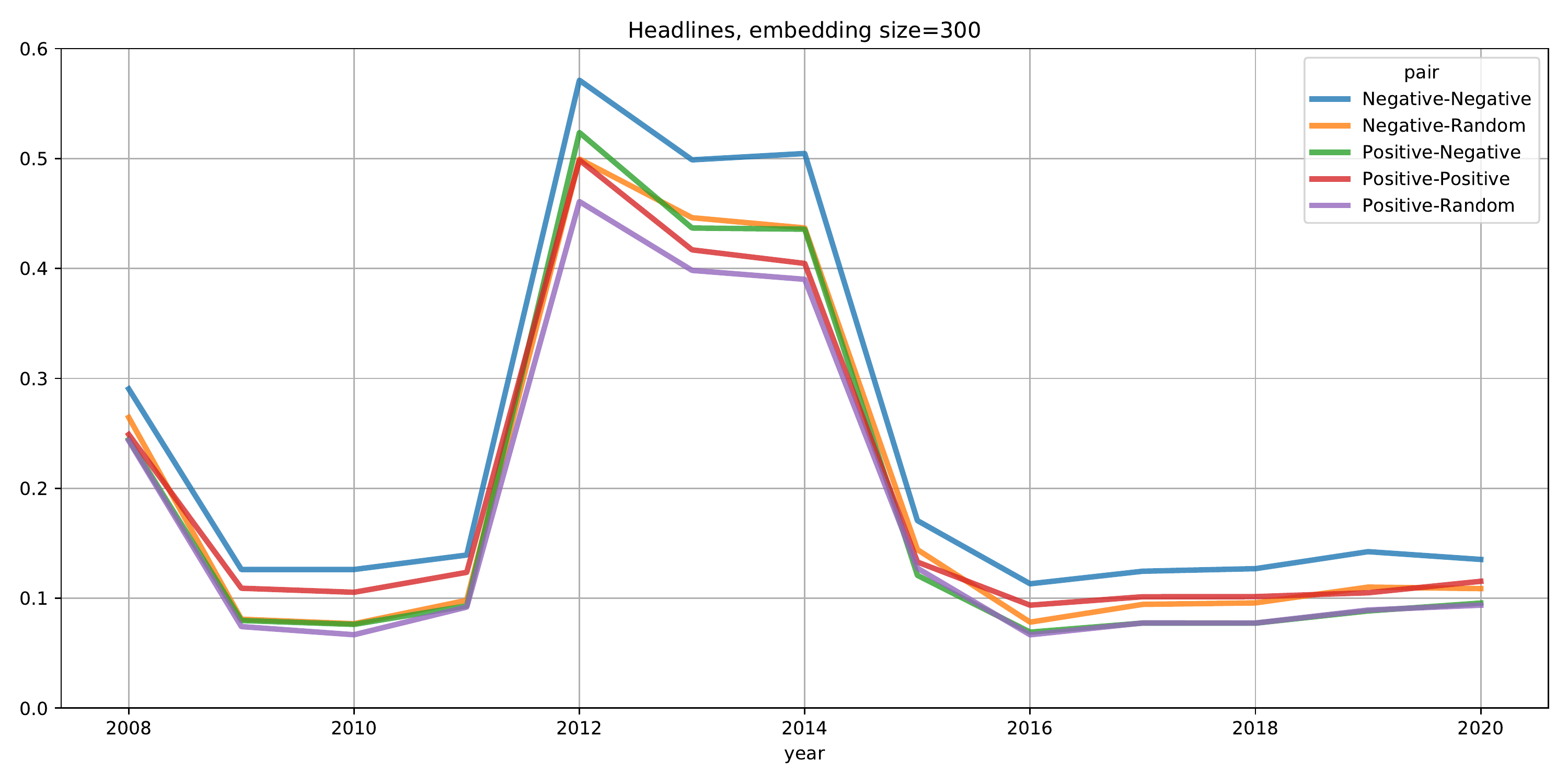}\\
    \includegraphics[width=\linewidth]{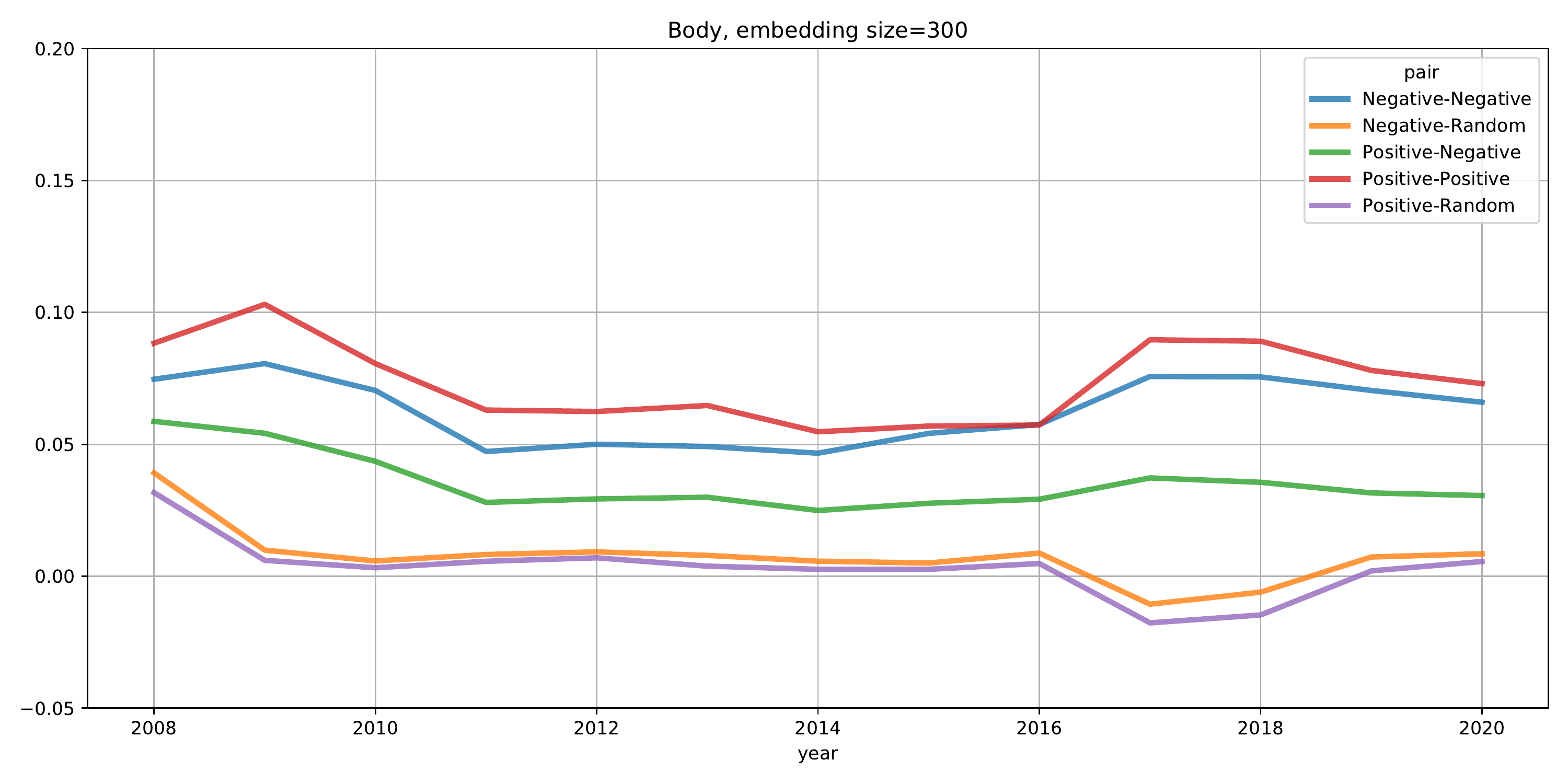}
    \caption{Yearly cosine similarity between the embeddings of Loughran-McDonald Positive and Negative lexicon, for a skip-gram word2vec model learned on the headlines only (bottom) or on the full text (top) of financial News.}
    \label{fig:yearly:cosine}
\end{figure}

Table \ref{tab:desc:bloom} shows the number of document in our corpus each year; we have clearly less headlines in 2008, 2012, 2013 and 2014, and it is reflected in the average cosine similarities of Figure \ref{fig:yearly:cosine} (top panel). There is not enough documents, hence not enough diversity in the sentences, and as a consequence words are not been seen in differentiating enough contexts: the global cosine similarity is higher.

That being put aside, we observe that the averaged results of Figure \ref{fig:avg:cosin:glob} are in line with their time series representation: the ordering of cosine similarities between groups of words of same polarity are the same every year. Despite a widening of the difference between words of the same polarity (red and blue lines) and the cosine between polarized words and random words (orange and violet lines) in 2017 and 2018, \emph{the embeddings seem to be stationary} in the sense that there is no real change in their relative ranking from one year to another. 

\subsubsection{Influence of the embedding size on the polarities}

Figure \ref{fig:yearly:pos2pos:changes} Shows the influence of embedding size on the cosine similarity between embeddings of Positive words with other classes (Positive, Negative, Random) words.\footnote{The effects are the same for Negative words.} 
The empirical results are mixed since on the one hand when the dimension decreases, the similarity inside the class of positive words increase (that is good), but the similarity between positive and negative words increases too. The similarity to random words stays close to zero.

\paragraph{Conjecture on the influence of embedding size on frequentist synonyms.}
This observation is compatible with the idea that the lower the dimension to represent the language, the more difficult to memorize the differences in contexts of words. As a consequence, words with the same polarity will become more similar, but antonyms will become more similar too.

\begin{figure}[!h]
    \centering
    \includegraphics[width=\linewidth]{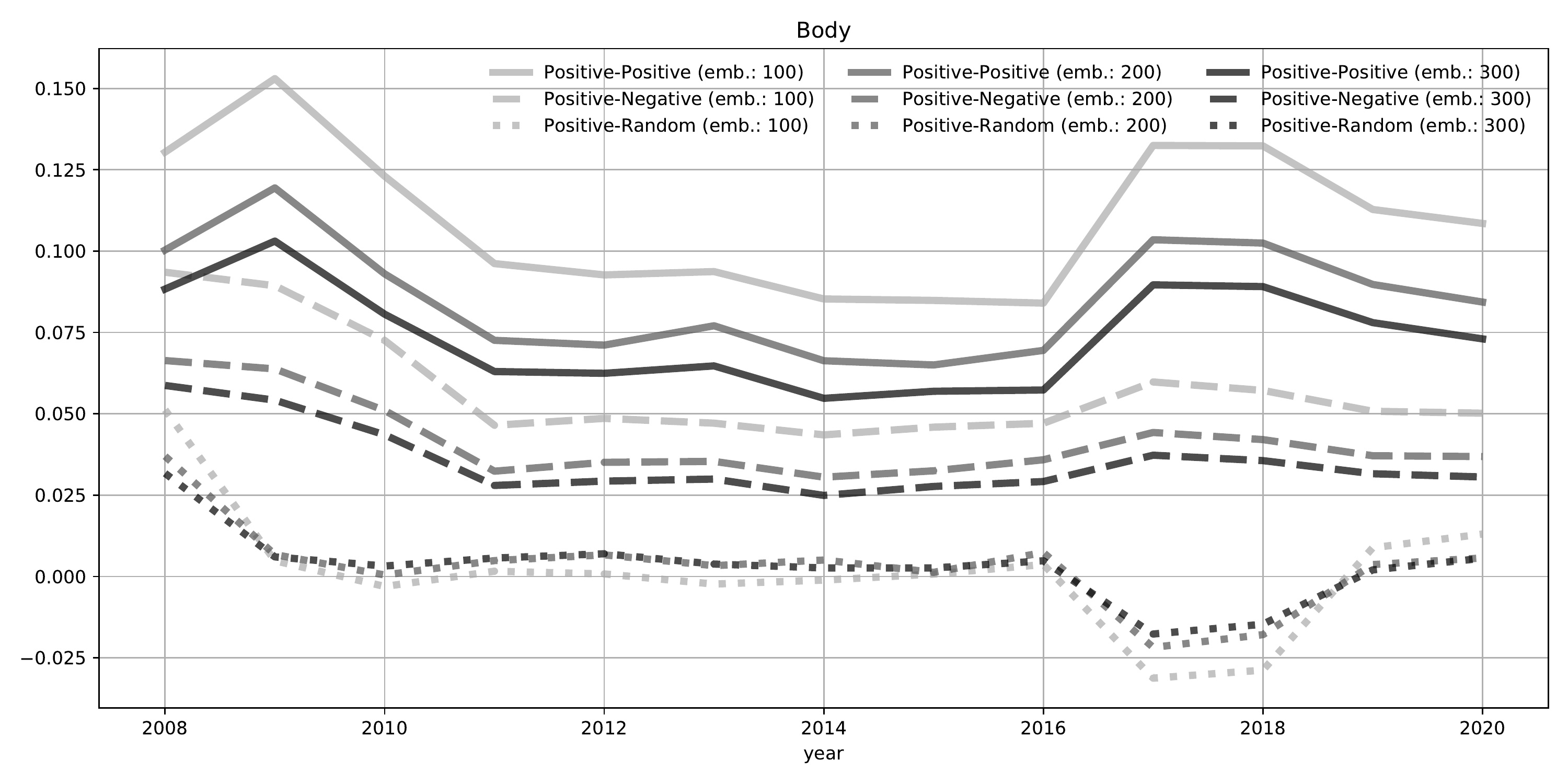}
    \caption{Change of the cosine similarity between positive words and random words as a function of the size of embeddings.}
    \label{fig:yearly:pos2pos:changes}
\end{figure}

\subsubsection{When a company name becomes a frequentist synonym}

Figure \ref{fig:yearly:theranos} document an interesting effect that can have an influence on using embeddings to predict returns of listed companies.
The important question to ask is if we want that a sentence like
``\emph{Microsoft printed booming results}''\footnote{``booming'' is part of the Loughran-McDonald list of positively polarized words.} is understood the same way as ``\emph{Google printed booming results}'' by an embedding-based system.
It would require that \emph{Microsoft} and \emph{Google} have no polarity. 
There is no semantic reason for a company name to be positively or negatively biased, i.e. to have a cosine similarity different from zero to a group of polarized words.

This analysis is not systematic, we simply took the only company name of Wikipedia's \emph{List of corporate collapses and scandals} that is in our database, and computed its cosine similarity with the five most numerous lists of the Loughran-McDonald lexicon (Positive, Negative, Litigious and Uncertain).

\begin{figure}[!h]
    \centering
    \includegraphics[width=\linewidth]{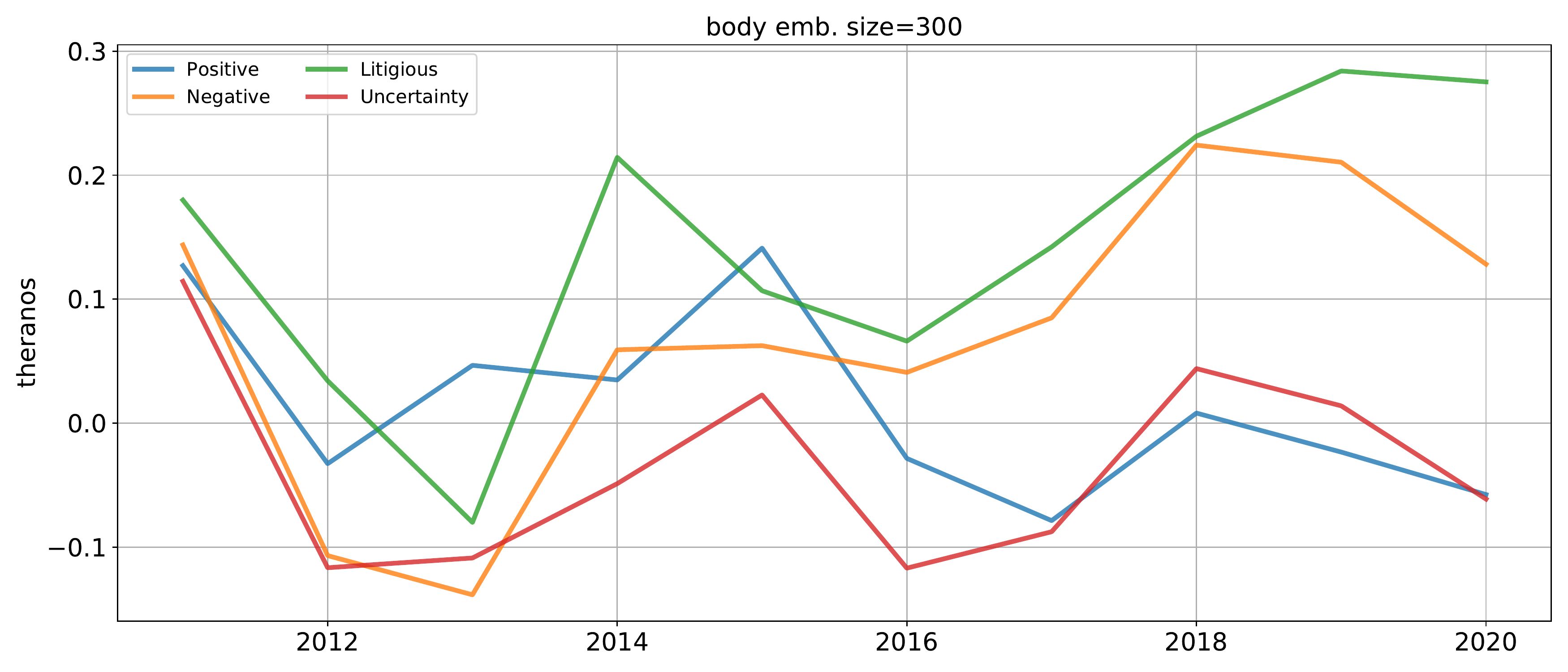}
    \caption{Yearly cosine similarity between the embeddings of Theranos and the Loughran-McDonald lexicon (it is measured using the {\tt model.wv.n\_similarity} function of gensim).}
    \label{fig:yearly:theranos}
\end{figure}

Before commenting the results, have a look at the quick summary made by Wikipedia\footnote{On Theranos Wikipedia page as of the 10th of March 2020.} of the Theranos case:
\\[.3em]
\framebox{\parbox{\linewidth}{\sf \small%
In March 2018 the US Securities and Exchange Commission charged Theranos, its CEO Elizabeth Holmes and former president Ramesh "Sunny" Balwani, claiming they had engaged in an "elaborate, years-long fraud" wherein they "deceived investors into believing that its key product – a portable blood analyzer – could conduct comprehensive blood tests from finger drops of blood".%
}}\\[.3em]

On Figure \ref{fig:yearly:theranos}, we see the change in the polarity of the embedding of the term \emph{Theranos} in the embeddings. This means that in 2018 and 2019, sentences with the name of this company is tinted with negativity and litigation compared to the same sentence concerning another company.
With an exaggerated anthropomorphism, we could say that the embedding now ``believes'' that most sentences concerning this company are negative or litigious. In fact this company name became a  frequentist synonym of negative and litigious terms.
Probably because from the viewpoint of skip-gram Word2vec loss function, it was ``easier'' to get this company name closer to negative and litigious terms than to keep it away from them (like any company name should semantically be).

The goal of this paper is not to investigate further on this kind of polarization of entities, that is in fact covered in a completely different context by the literature on \emph{fairness of NLP}, see \cite{romanov2019s} for an example and \cite{shah2019predictive} for an overview. 


\section{Conclusion}

In this paper, we presented some theoretical understanding of word embeddings, essentially using the skip-gram word2vec model. Moreover we explain why more sophisticated model should inherit, at least locally, of some of these properties.
It allows us the define the concept of Reference Model (that is the uncompressed version of an embedding model), and to show that asymptotically the loss function of such a learning algorithm is a cross entropy between the representation of the model and the distribution of the Reference Model.
Moreover, it lead us to define frequentists synonyms, i.e. words that have the same context in the considered corpus. It is impossible for embeddings to make the difference between exact frequentist synonyms, and difficult for approximate synonyms.

Then we test these concepts on synthetic corpora generated using controlled Markovian models, so that we can focus on the identifiability of skip-gram word2vec embeddings. We observe that if their identifiability is poor, the cosine similarity between embeddings makes sense, even when it is low: frequentist synonyms are closer to word from their class than to words of another group of synonyms.

Last but not least, we provide empirical observations on a financial corpus: we use the Loughran-McDonald lexicon to obtain semantic synonyms and antonyms: lists of polarized words (Positive vs. negative words essentially).
And we compare the polarity of embeddings trained on headlines of News with other trained on the full text of the same News.
We observe that on News headlines, that are short and structured sentences, semantic antonyms are often frequentist synonyms, and hence it is difficult for embeddings learned on such headlines to make the difference between positive and negative words.
On the opposite, embeddings learned on the full body of the News are more reflecting the polarities of the considered lexicon. In fact they are better reflecting financial polarities than embeddings trained on Wikipedia.
It seems that the dimension of the embeddings has an influence on the cosine similarities between polarities: the lower the dimension, the more difficult to make the difference, in the space of embeddings, between positive and negative words.
We moreover observe that names of companies can be tinted with polarity. The structure of the loss function of embeddings can lead them to accept to represent a company name close to a polarized word if it appears a lot in negative News: this company name is thus becoming a frequentist synonym of negative and litigious words.

\bibliographystyle{unsrt}
\bibliography{maths_of_nlp}

\begin{thebibliography}{10}

\bibitem{laham1998learning}
TKLD Laham and Peter Foltz.
\newblock Learning human-like knowledge by singular value decomposition: A
  progress report.
\newblock {\em Advances in neural information processing systems}, 10:45, 1998.

\bibitem{lagus1999websom}
Krista Lagus, Timo Honkela, Samuel Kaski, and Teuvo Kohonen.
\newblock Websom for textual data mining.
\newblock {\em Artificial Intelligence Review}, 13(5):345--364, 1999.

\bibitem{2013arXiv1310.4546M}
Tomas Mikolov, Ilya Sutskever, Kai Chen, Greg~S Corrado, and Jeff Dean.
\newblock Distributed representations of words and phrases and their
  compositionality.
\newblock In C.~J.~C. Burges, L.~Bottou, M.~Welling, Z.~Ghahramani, and K.~Q.
  Weinberger, editors, {\em Advances in Neural Information Processing Systems},
  volume~26, pages 3111--3119. Curran Associates, Inc., 2013.

\bibitem{ghosh2016contextual}
Shalini Ghosh, Oriol Vinyals, Brian Strope, Scott Roy, Tom Dean, and Larry
  Heck.
\newblock Contextual lstm (clstm) models for large scale nlp tasks.
\newblock {\em arXiv preprint arXiv:1602.06291, presented at KDD 2016}, 2016.

\bibitem{vaswani2017attention}
Ashish Vaswani, Noam Shazeer, Niki Parmar, Jakob Uszkoreit, Llion Jones,
  Aidan~N Gomez, {\L}ukasz Kaiser, and Illia Polosukhin.
\newblock Attention is all you need.
\newblock In {\em Proceedings of the 31st International Conference on Neural
  Information Processing Systems}, pages 6000--6010, 2017.

\bibitem{devlin-etal-2019-bert}
Jacob Devlin, Ming-Wei Chang, Kenton Lee, and Kristina Toutanova.
\newblock {BERT}: Pre-training of deep bidirectional transformers for language
  understanding.
\newblock In {\em Proceedings of the 2019 Conference of the North {A}merican
  Chapter of the Association for Computational Linguistics: Human Language
  Technologies, Volume 1 (Long and Short Papers)}, pages 4171--4186,
  Minneapolis, Minnesota, June 2019. Association for Computational Linguistics.

\bibitem{implicitmatrix}
Omer Levy and Yoav Goldberg.
\newblock Neural word embedding as implicit matrix factorization.
\newblock In Z.~Ghahramani, M.~Welling, C.~Cortes, N.~Lawrence, and K.~Q.
  Weinberger, editors, {\em Advances in Neural Information Processing Systems},
  volume~27, pages 2177--2185. Curran Associates, Inc., 2014.

\bibitem{blei2003latent}
David~M Blei, Andrew~Y Ng, and Michael~I Jordan.
\newblock Latent dirichlet allocation.
\newblock {\em the Journal of machine Learning research}, 3:993--1022, 2003.

\bibitem{may2019downstream}
Avner May, Jian Zhang, Tri Dao, and Christopher R{\'e}.
\newblock On the downstream performance of compressed word embeddings.
\newblock {\em Advances in neural information processing systems}, 32:11782,
  2019.

\bibitem{raunak2019effective}
Vikas Raunak, Vivek Gupta, and Florian Metze.
\newblock Effective dimensionality reduction for word embeddings.
\newblock In {\em Proceedings of the 4th Workshop on Representation Learning
  for NLP (RepL4NLP-2019)}, pages 235--243, 2019.

\bibitem{acharya2019online}
Anish Acharya, Rahul Goel, Angeliki Metallinou, and Inderjit Dhillon.
\newblock Online embedding compression for text classification using low rank
  matrix factorization.
\newblock In {\em Proceedings of the AAAI Conference on Artificial
  Intelligence}, volume~33, pages 6196--6203, 2019.

\bibitem{loughran2011liability}
Tim Loughran and Bill McDonald.
\newblock When is a liability not a liability? textual analysis, dictionaries,
  and 10-ks.
\newblock {\em The Journal of finance}, 66(1):35--65, 2011.

\bibitem{kumar2017iitpb}
Abhishek Kumar, Abhishek Sethi, Md~Shad Akhtar, Asif Ekbal, Chris Biemann, and
  Pushpak Bhattacharyya.
\newblock Iitpb at semeval-2017 task 5: Sentiment prediction in financial text.
\newblock In {\em Proceedings of the 11th International Workshop on Semantic
  Evaluation (SemEval-2017)}, pages 894--898, 2017.

\bibitem{ke2019predicting}
Zheng~Tracy Ke, Bryan~T Kelly, and Dacheng Xiu.
\newblock Predicting returns with text data.
\newblock Technical report, National Bureau of Economic Research, 2019.

\bibitem{li2014news}
Xiaodong Li, Haoran Xie, Li~Chen, Jianping Wang, and Xiaotie Deng.
\newblock News impact on stock price return via sentiment analysis.
\newblock {\em Knowledge-Based Systems}, 69:14--23, 2014.

\bibitem{xing2018natural}
Frank~Z Xing, Erik Cambria, and Roy~E Welsch.
\newblock Natural language based financial forecasting: a survey.
\newblock {\em Artificial Intelligence Review}, 50(1):49--73, 2018.

\bibitem{araci2019finbert}
Dogu Araci.
\newblock Finbert: Financial sentiment analysis with pre-trained language
  models.
\newblock {\em arXiv preprint arXiv:1908.10063}, 2019.

\bibitem{2013arXiv1301.3781M}
Tomas {Mikolov}, Kai {Chen}, Greg {Corrado}, and Jeffrey {Dean}.
\newblock {Efficient Estimation of Word Representations in Vector Space}.
\newblock {\em arXiv e-prints}, page arXiv:1301.3781, January 2013.

\bibitem{word2vec}
Xin {Rong}.
\newblock {word2vec Parameter Learning Explained}.
\newblock {\em arXiv e-prints}, page arXiv:1411.2738, November 2014.

\bibitem{Goodfellow-et-al-2016}
Ian Goodfellow, Yoshua Bengio, and Aaron Courville.
\newblock {\em Deep Learning}.
\newblock MIT Press, 2016.
\newblock \url{http://www.deeplearningbook.org}.

\bibitem{LDA}
David~M. Blei, Andrew~Y. Ng, and Michael~I. Jordan.
\newblock Latent dirichlet allocation.
\newblock {\em The Journal of Machine Learning Research}, 3(null):993–1022,
  March 2003.

\bibitem{revuz2008markov}
Daniel Revuz.
\newblock {\em Markov chains}.
\newblock Elsevier, 2008.

\bibitem{LevinPeresWilmer2006}
David~A. Levin, Yuval Peres, and Elizabeth~L. Wilmer.
\newblock {\em Markov chains and mixing times}.
\newblock American Mathematical Society, 2006.

\bibitem{legall}
Jean-François~Le Gall.
\newblock Intégration, probabilités et processus aléatoires, Septembre 2006.

\bibitem{mixting}
Justin Salez.
\newblock Temps de mélange des chaînes de markov.

\bibitem{tao2012topics}
Terence Tao.
\newblock {\em Topics in random matrix theory}, volume 132.
\newblock American Mathematical Soc., 2012.

\bibitem{choromanska2015loss}
Anna Choromanska, Mikael Henaff, Michael Mathieu, G{\'e}rard~Ben Arous, and
  Yann LeCun.
\newblock The loss surfaces of multilayer networks.
\newblock In {\em Artificial intelligence and statistics}, pages 192--204.
  PMLR, 2015.

\bibitem{dim}
Zi~Yin and Yuanyuan Shen.
\newblock On the dimensionality of word embedding.
\newblock In S.~Bengio, H.~Wallach, H.~Larochelle, K.~Grauman, N.~Cesa-Bianchi,
  and R.~Garnett, editors, {\em Advances in Neural Information Processing
  Systems}, volume~31, pages 887--898. Curran Associates, Inc., 2018.

\bibitem{gentzkow2019text}
Matthew Gentzkow, Bryan Kelly, and Matt Taddy.
\newblock Text as data.
\newblock {\em Journal of Economic Literature}, 57(3):535--74, 2019.

\bibitem{romanov2019s}
Alexey Romanov, Maria De-Arteaga, Hanna Wallach, Jennifer Chayes, Christian
  Borgs, Alexandra Chouldechova, Sahin Geyik, Krishnaram Kenthapadi, Anna
  Rumshisky, and Adam~Tauman Kalai.
\newblock What's in a name? reducing bias in bios without access to protected
  attributes.
\newblock {\em arXiv preprint arXiv:1904.05233}, 2019.

\bibitem{shah2019predictive}
Deven Shah, H~Andrew Schwartz, and Dirk Hovy.
\newblock Predictive biases in natural language processing models: A conceptual
  framework and overview.
\newblock {\em arXiv preprint arXiv:1912.11078}, 2019.

\bibitem{fares-etal-2017-word}
Murhaf Fares, Andrey Kutuzov, Stephan Oepen, and Erik Velldal.
\newblock Word vectors, reuse, and replicability: Towards a community
  repository of large-text resources.
\newblock In {\em Proceedings of the 21st Nordic Conference on Computational
  Linguistics}, pages 271--276, Gothenburg, Sweden, May 2017. Association for
  Computational Linguistics.

\end{thebibliography}

\appendix

\section{Technical details}
\label{sec:tech:appendix}

We use Python 3.7.6 and Gensim 3.8.3 to train our models on AWS with financial news. On financial news headlines, we set \texttt{min\_count=1} to include every word in headlines while we let \texttt{min\_count=5} as the default value of Gensim on financial news which means we only consider the words count more than 5 times. Our \texttt{epoque =1}, i.e., each training sample is used only 1 time. As the model is initialized by training sentences, our \texttt{total\_examples=model.corpus\_count}.

Training skip-gram word2vec on one year of News takes around 2h and on one year of headlines it takes around 1h30. We use {\tt ml.p2.xlarge} as of March 2021, i.e. 1  NVIDIA K80 GPU, 1 vCPU, and 64Go of RAM.

The model trained on English Wikipedia Dump of February 2017 uses Gensim Continuous skip-gram with no lemmatization. It is provided by Language Technology Group at the University of Oslo in NLPL word embeddings repository \cite{fares-etal-2017-word}.


\end{document}